  \providecommand\BibTeX{{%
    \normalfont B\kern-0.5em{\scshape i\kern-0.25em b}\kern-0.8em\TeX}}}
\pgfplotsset{compat=1.16}
\def\-{\mbox{-}}
\def\la{\langle}
\def\ra{\rangle}
\def\J{\mathcal{J}}
\def\V{\mathcal{V}}
\def\E{\mathcal{E}}
\def\bh{\bm{h}}
\def\Ep{\mathbb{E}}
\def\X{\mathcal{X}}
\def\Y{\mathcal{Y}}
\def\D{\mathcal{D}}
\def\I{\mathcal{I}}
\def\Jo{\mathbb{J}}
\def\Ma{\mathbb{M}}
\def\*{\star}
\newcommand{\argmin}{arg\,min}
\definecolor{Mulberry}{rgb}{0.77,0.29,0.55}
\definecolor{CadmiumOrange}{rgb}{0.93,0.53, 0.18}
\definecolor{ForestGreen}{rgb}{0.13, 0.55, 0.13}
\definecolor{WildStrawberry}{rgb}{0.5, 0.7, 0.2}
\newcommand{\axissize}{\@setfontsize{\axissize}{5pt}{5pt}}
\newcommand{\Labelsize}{\@setfontsize{\Labelsize}{5pt}{5pt}}
\theoremstyle{definition}
\newtheorem{definition}{Definition}
\theoremstyle{plain}
\newtheorem{proposition}{Proposition}
\DeclareMathOperator{\R}{\mathbb{R}}
\newcommand{\norm}[1]{\left\lVert#1\right\rVert}
\newcolumntype{P}[1]{>{\centering\arraybackslash}p{#1}}
\pgfplotsset{linestyle/.style={%
        font=\rmfamily\Labelsize,
        width=0.26\textwidth,
        height=3.3cm,
        mark size=1.1pt,
        ylabel near ticks,
        xlabel near ticks,
        label style = {font=\scriptsize},
        tick label style = {font=\scriptsize, yshift=0.5ex},
        ylabel shift = -6 pt, 
        xlabel shift = -5 pt,
        title style={yshift=-1.2ex,font=\footnotesize},
        y tick label style={/pgf/number format/.cd,fixed,fixed zerofill,precision=2,/tikz/.cd},
        legend image post style={scale=0.5},
        every axis plot/.append style={semithick},
        legend style={font=\scriptsize, mark size=4pt},
        legend columns=1, legend style={/tikz/column 2/.style={column sep=0.5pt}},
        mark options={scale=1.7}}}
\pgfplotsset{barstyle/.style={%
    ybar,
    width=0.245\textwidth,
    label style={font=\scriptsize},
    tick label style={font=\scriptsize,yshift=0.5ex},
    axis x line*=bottom,
    axis y line*=none,
    height=3.3cm,
    title style={yshift=-0.5ex,font=\small},
    ylabel shift = -5 pt, 
    xlabel shift = -2 pt,
    ylabel near ticks,
    xlabel near ticks,
    tickwidth         = 3pt,
    bar width=0.7mm,
    legend style={font=\scriptsize, mark size=4pt},
    y tick label style={/pgf/number format/.cd,fixed,fixed zerofill,precision=2,/tikz/.cd},
    legend columns=2, legend style={/tikz/column 2/.style={column sep=0.5pt}},
    label style = {font=\small}
}}
\definecolor{mycolor1}{RGB}{31, 95, 139}
\definecolor{mycolor2}{RGB}{122, 81, 149}
\definecolor{mycolor3}{RGB}{239, 86, 117}
\definecolor{mycolor4}{RGB}{255, 166, 0}
\begin{document}
\fancyhead{}

\title{Detecting Arbitrary Order Beneficial Feature Interactions for Recommender Systems}


\author{Yixin Su}
\email{yixins1@student.unimelb.edu.au}
\affiliation{%
  \institution{The University of Melbourne}
  \country{Australia}
}

\author{Yunxiang Zhao}
\email{zhaoyx1993@163.com}
\affiliation{%
  \institution{Bejing Institute of Biotechnology}
  \country{China}
}

\author{Sarah Erfani}
\authornote{Corresponding Author.}
\email{sarah.erfani@unimelb.edu.au}
\affiliation{%
  \institution{The University of Melbourne}
  \country{Australia}
}

\author{Junhao Gan}
\email{junhao.gan@unimelb.edu.au}
\affiliation{%
  \institution{The University of Melbourne}
  \country{Australia}
}

\author{Rui Zhang}
\email{rayteam@yeah.net}
\affiliation{%
  \institution{\href{https://www.ruizhang.info/}{www.ruizhang.info}
  \country{China}}
}
\renewcommand{\shortauthors}{Yixin Su et al.}

\begin{abstract}

Detecting beneficial feature interactions is essential in recommender systems, and existing approaches achieve this by examining all the possible feature interactions. However, the cost of examining all the possible higher-order feature interactions is prohibitive (exponentially growing with the order increasing). Hence existing approaches only detect limited order (e.g., combinations of up to four features) beneficial feature interactions, which may miss beneficial feature interactions with orders higher than the limitation.
In this paper, we propose a hypergraph neural network based model named HIRS. HIRS is the first work that directly generates beneficial feature interactions of arbitrary orders and makes recommendation predictions accordingly.
The number of generated feature interactions can be specified to be much smaller than the number of all the possible interactions and hence, our model admits a much lower running time.
To achieve an effective algorithm, we exploit three properties of beneficial feature interactions, and propose deep-infomax-based methods to guide the interaction generation.
Our experimental results show that HIRS outperforms state-of-the-art algorithms by up to 5\% in terms of recommendation accuracy.
\end{abstract}

\begin{CCSXML}
<ccs2012>
<concept>
<concept_id>10002951.10003317.10003347.10003350</concept_id>
<concept_desc>Information systems~Recommender systems</concept_desc>
<concept_significance>500</concept_significance>
</concept>
<concept>
<concept_id>10002950.10003624.10003633.10003637</concept_id>
<concept_desc>Mathematics of computing~Hypergraphs</concept_desc>
<concept_significance>300</concept_significance>
</concept>
<concept>
<concept_id>10010147.10010257.10010321.10010336</concept_id>
<concept_desc>Computing methodologies~Feature selection</concept_desc>
<concept_significance>300</concept_significance>
</concept>
</ccs2012>
\end{CCSXML}

\ccsdesc[500]{Information systems~Recommender systems}
\ccsdesc[300]{Mathematics of computing~Hypergraphs}
\ccsdesc[300]{Computing methodologies~Feature selection}

\keywords{Recommender Systems, Feature Interactions, Graph Neural Networks, Mutual Information Maximization}
\maketitle

\section{Introduction}
Features interactions (e.g., co-occurrence of user/item attributes) are essential in providing accurate recommendation predictions \cite{blondel2016polynomial}. 
Pairwise feature interactions, as the most basic feature interaction form, have been utilized in many studies \cite{he2017neural,li2019fi,su2021neural}.
Recently, researchers found that more sophisticated forms of feature interactions (i.e., high-order interactions) help achieve more accurate recommendation predictions \cite{lian2018xdeepfm,liu2020autofis}.
For example, if we observe that male teenagers like Nolan's sci-fi movies, the interactions (order-4) between user gender, user age, movie director, and movie genre provide useful information for recommendation that pairwise feature interactions cannot.

\begin{figure}[t]
\centerline{\includegraphics[width=0.99\columnwidth]{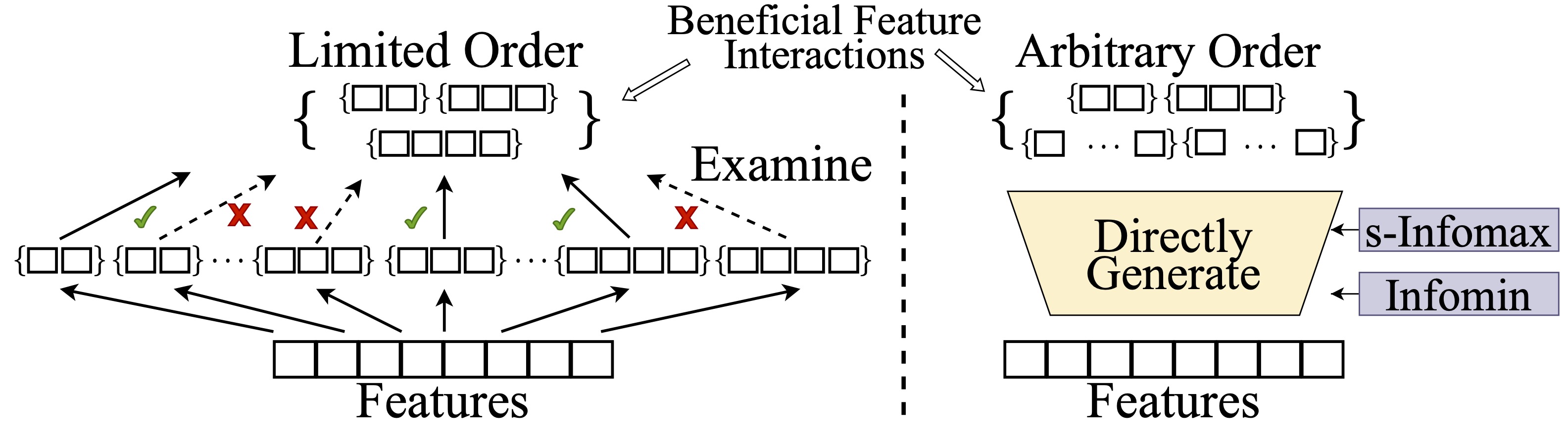}}
\caption{\small \textit{Left}: existing studies detect beneficial feature interactions within limited orders by examining all the possible feature interactions. \textit{Right}: we detect arbitrary order beneficial feature interactions by directly generating them with our deep-infomax-based methods (s-Infomax and Infomin).}
\label{fig:running_example}
\end{figure}

Since the number of all the possible feature interactions grows exponentially as the order increases, 
considering all the possible high-order feature interactions is prohibitive due to the high complexity. 
Meanwhile, considering the feature interactions irrelevant to the recommendation result may introduce noise and decrease the prediction accuracy \cite{louizos2017learning,wang2019doubly}.
Therefore, a practical solution to consider high-order feature interactions is to detect a small set of most \textit{beneficial} feature interactions and perform predictions based on the beneficial ones \cite{zeng2015novel}.
A set of feature interactions is considered most beneficial if learning from them results in a more accurate prediction than learning from other sets \cite{su2020detecting}.
However, existing interaction detection methods have to examine all the possible feature interactions to find the most beneficial feature interactions \cite{su2020detecting,liu2020autofis}.   
This leaves the complexity issue unsolved and they resort to detecting only limited order (e.g., combinations of up to four features) beneficial feature interactions (Figure \ref{fig:running_example} \textit{left}). 
Hence, it is still an urgent yet challenging problem to efficiently detect arbitrary order beneficial feature interactions.

In this work, we propose a method that can efficiently detect beneficial feature interactions of arbitrary orders. 
Our method learns a neural network that directly generates the beneficial feature interactions (Figure \ref{fig:running_example} \textit{right}).
The number of generated feature interactions can be specified to be much smaller than the number of all the possible interactions and hence, our method admits a much lower running time.
Then, we achieve recommendation predictions based on the generated feature interactions.
Specifically, we propose a hypergraph neural network based model called \underline{H}ypergraph \underline{I}nfomax \underline{R}ecommender \underline{S}ystem (HIRS).
HIRS learns a hypergraph representation for each data sample, 
with each feature in the data represented as a node and each beneficial feature interaction represented as a hyperedge. 
The beneficial feature interaction generation is conducted via hyperedge matrix prediction, and the recommendation prediction is conducted via hypergraph classification.

To achieve an effective beneficial feature interaction generation, we exploit three properties of beneficial feature interactions based on information theory:
\begin{itemize}[leftmargin = *]
\item \textit{Sufficiency}: all beneficial feature interactions together contain as much relevant information about the true output as possible. 
\item \textit{Low redundancy}: each beneficial feature interaction contains as less irrelevant information about the true output as possible.
\item \textit{Disentanglement}: any two beneficial feature interactions contain as less duplicated relevant information about the true output as possible.
\end{itemize}
Based on these properties, we leverage Deep Infomax \cite{hjelm2018learning} and propose supervised mutual information maximization (s-Infomax) and mutual information minimization (Infomin) methods, together with an $L_0$ activation regularization, to guide the interaction generation.

To effectively learn from the generated beneficial feature interactions for recommendation, we leverage the arbitrary-order correlation reasoning ability of hypergraph neural networks \cite{feng2019hypergraph}. We propose an interaction-based hypergraph neural network (IHGNN) to perform recommendation prediction based on the generated interactions via hypergraph classification.

We summarize our contributions as follows:
\begin{itemize}[leftmargin = *]
\item We propose a hypergraph neural network based model named HIRS.
\footnote{The implementation of our model can be found at 
\url{https://github.com/ruizhang-ai/HIRS_Hypergraph_Infomax_Recommender_System.git}.}.
HIRS is the first work that directly generates a set of beneficial feature interactions of arbitrary orders via hyperedge prediction and then performs recommendation prediction via hypergraph classification.
\item We define three properties of beneficial feature interactions to guide the proposed s-Infomax and Infomin methods for beneficial feature interaction generation. We further propose an interaction-based hypergraph neural network (IHGNN), which effectively learns from the generated interactions for recommendation.
\item Experimental results show that HIRS outperforms state-of-the-art feature interaction based models for recommendation. The further evaluations prove our model's ability in beneficial feature interaction generation.
\end{itemize}

\section{Related Work}
\label{sec:related_work}
\subsection{Feature Interaction based Recommendation}
Pairwise feature interactions have been widely used for recommendation, such as factorization machine (FM) based models \cite{rendle2010factorization,he2017neural,su2019mmf} and graph neural network based models \cite{su2020detecting,li2019fi}.
Higher-order feature interactions provide more information about the true output but have not been largely explored due to its high complexity. 
A multilayer perceptron (MLP) can implicitly model any order feature interactions for recommendation \cite{hornik1989multilayer,cheng2016wide,guo2017deepfm}, but implicit modeling is ineffective in learning feature interactions \cite{beutel2018latent}. 
\citet{lian2018xdeepfm} explicitly model all feature interactions under a limited order (e.g., up to four orders).
Besides modeling all feature interactions,
there are studies that detect a set of beneficial feature interactions for recommendation \cite{su2020detecting,liu2020autofis}.
However, they have to examine all the possible feature interactions, which leaves the complexity issue unsolved. 
Different from existing studies, we directly generate beneficial feature interactions without examining all the possible ones.

\subsection{Graph Neural Networks} 
Graph Neural Networks (GNNs) are powerful in correlation reasoning between entities \cite{xu2018powerful,zhao2021wgcn}. 
They have been applied in various research domains such as molecular prediction \cite{gilmer2017neural}, object relation learning \cite{chang2016compositional}, and user-item interaction modeling \cite{berg2017graph,wang2019neural}. 
Recently, GNNs show the power of feature interaction learning for recommendation \cite{li2019fi,su2020detecting,li2021graphfm}.
However, existing studies only capture pairwise feature interactions in a graph since each edge can only link to two nodes. 
We are the first to capture arbitrary order feature interactions and propose a Hypergraph Neural Network \cite{feng2019hypergraph} based model to learn arbitrary order feature interactions uniformly.

\subsection{Mutual Information Maximization}
Mutual information maximization is a critical technique in unsupervised representation learning \cite{linsker1988self}.
Deep Infomax (DIM, \citet{hjelm2018learning}) leverages a deep neural network-based method \cite{belghazi2018mutual} to maximize the mutual information between an image's local and global representations. A graph-based DIM is proposed to maximize the mutual information between node representations and graph representations \cite{velivckovic2018deep}.
For the first time, our work applies DIM on hypergraph that maximizes the mutual information between hyperedge representations and graph representations.
Contrastive learning is close to DIM but maximizes the mutual information between the two views of the same variable \cite{chen2020simple}.
\citet{tian2020makes} define properties of good views in contrastive learning. They hypothesize that each view contains all the information of an input variable about the output. 
We define the properties of beneficial feature interactions hypothesizing that each interaction only contains part of the information about the output.
\citet{khosla2020supervised} extend contrastive learning to a supervised setting. Our s-Infomax is similar to \cite{khosla2020supervised} but applies the supervised learning between local and global representations. 

\section{Preliminaries}

\subsection{Hypergraph Neural Networks}
Different from the edges in a typical graph, each edge in a hypergraph (i.e., hyperedge) can link to an arbitrary number of nodes. A hypergraph $G_H=\la \V, \E\ra$ contains a node set $\V=\{i\}_{i=1}^{m}$, where $m$ is the number of nodes. Each node $i$ is represented as a vector $\bm{v}_i\in\R^{d}$ of $d$ dimensions (e.g., node embedding).
The hyperedge set $\E=\{\bm{e}_j\}_{j=1}^{k}$ contains $k$ edges.
Each edge $\bm{e}_{j}$ is an $m$-dimensional binary vector ($e_{ji}\in\{0,1\}$), where $e_{ji}=1$ if the $j^{th}$ edge links to node $i$, and $e_{ji}=0$ otherwise.
We represent all the node vectors in a graph as a node representation matrix $\bm{V}\in\R^{m\times d}$ and the edge set as an incidence matrix $\bm{E}\in \{0,1\}^{m\times k}$.

In general, Hypergraph Neural Network (HGNN, \citet{feng2019hypergraph}) first generates a representation $\bm{h}_j$ for each hyperedge $j$ by aggregating the linked node vectors. Then, each node vector is updated by aggregating the edge representations linked to the node (called node patch representation). The procedure is:
\begin{equation}
\label{fun:HGNN}
\bm{h}_j = \phi(\{\bm{V}_i|\bm{E}_{ij}=1\}_{i\in\V}),\quad \bm{v}^{\prime}_{i}=\psi(\{\bm{h}_j|\bm{E}_{ij}=1\}_{\bm{e}_j\in\E}),
\end{equation}
where $\phi(\cdot)$ and $\psi(\cdot)$ are aggregation functions (e.g., element-wise mean), and $\bm{v}^{\prime}_{i}$ is the patch representation of node $i$.
The graph representation $\bm{c}\in\R^d $ can be generated by further aggregating the node patch representations:
\begin{equation}
\label{fun:HGNN_agg}
\bm{c} = \eta(\{\bm{v}^{\prime}_i\}_{i \in \V}),
\end{equation}
where $\eta(\cdot)$ is an aggregation function similar to $\phi(\cdot)$ and $\psi(\cdot)$.

\subsection{Deep Infomax}
Given two random variables $A$ and $B$, Mutual Information Neural Estimation (MINE, \citet{belghazi2018mutual}) estimates the mutual information $I(A;B)$ by training a discriminator that distinguishes the samples from their joint distribution $\Jo$ and from their marginal distribution $\Ma$. Specifically, MINE uses Donsker-Varadhan representation (DV, \citet{donsker1983asymptotic}) of KL-divergence as the lower bound to the mutual information:
\begin{equation}
\label{fun:DV_rep}
\small
\begin{split}
I(A;B) &= D_{KL}(\Jo||\Ma) \geqslant \hat{I}_{\omega}^{DV}(A;B)\\
&= \Ep_{\Jo}[T_{\omega}(a,b)]-\log\Ep_{\Ma}[e^{T_{\omega}(a,b)}],
\end{split}
\end{equation}
where $a$ and $b$ are the samples of $A$ and $B$, respectively, and $T_{\omega}$ is a classifier with parameters $\omega$.

In Deep Infomax \cite{hjelm2018learning} that aims to maximize a mutual information, the exact mutual information value is not important. Therefore, $\hat{I}_{\omega}^{DV}(A;B)$ can be conveniently maximized by optimizing a GAN-like objective function \cite{velivckovic2018deep,li2020graph}:
\begin{equation}
\label{fun:GAN_based_INFOMAX}
\small
\mathcal{L} = \Ep_{\Jo}[\log T_{\omega}(a,b)] + \Ep_{\Ma}[\log (1- T_{\omega}(a,b))].
\end{equation}

This objective function can be optimized by training the discriminator $T_{\omega}$ through a Binary Cross Entropy (BCE) loss \cite{velivckovic2018deep}.

\section{Problem Statement}
\label{sec:prel}

Denote by $\J$ the {\em universe} of features.
A {\em feature-value pair} is a name-value pair, denoted by $(o, w)$, where $o$ is the name of the feature and $w$ is the value of this feature.
For example, $(Male, 1)$ and $(Female, 1)$ mean the categorical features of Male and Female, respectively, where $Male \in \J$ and $Female \in \J$ are considered as two different features, and value 1 means the feature is in the data sample.
Let $\D:\X\times\Y$ be an input-output domain and $D=\{(\bm{x}_{n}, y_{n})\}_{i=1}^{N}$ is a set of $N$ input-output training pairs sampled from $\D$.
\begin{itemize}[leftmargin = *]
	\item $\bm{x}_n = \{(o, w)\}_{o\in J_n}$ is called a {\em data sample} consisting of a set of features, where $J_n \subseteq \J$.
	\item $y_n \in \{0,1\}$ is the {\em implicit feedback} (e.g., purchased, clicked) of the user on the item. 
\end{itemize}

\noindent
{\bf Feature Interaction based Recommendation.}
Given $(\bm{x}_{n}, y_{n})$, a recommendation model that \textit{explicitly} considers feature interactions first selects (either predefined or detected) $k_n$ feature interactions $\bm{I}_n=\{q_i\}_{i=1}^{k_n}$, where $q_i \subseteq \J_n$ indicates an interaction between the features in $q_i$. 
Then, a predictive model $F(\bm{x}_n,\bm{I}_n)=y^{\prime}_n$ is designed with a middle state $\bm{H}=\{\bm{h}_i\}_{i=1}^{k_n}$, where $\bm{h}_i\in\R^d$ is the high-level representation of $q_i$, and $y^{\prime}_n$ is the prediction of $y_n$.

\section{Our Proposed Method}
\label{sec:method}

In this section, we first give an overview of our proposed method. 
Then, we demonstrate HIRS in detail and the empirical risk minimization function of HIRS.
Finally, we provide analyses of our proposed model, including i) our proposed hypergraph neural network IHGNN can be easily generated to existing methods for feature interaction modeling; and ii) the interaction generation cost of HIRS.

\subsection{Model Overview}
\label{subsec:model_overview}
We start by introducing data representation with hypergraphs. Then we describe the overall structure of HIRS.

\subsubsection{Data Representation with Hypergraphs}
We focus on one input-output pair considering $k$ arbitrary order feature interactions. 
Existing GNN-based models can only represent pairwise feature interactions \cite{su2020detecting}. 
We propose a hypergraph form data representation that is more general in representing arbitrary order feature interactions. 
A hypergraph is an extension of a standard graph that each edge, i.e., hyperedge, can link to an arbitrary number of nodes.
We treat each data sample as a hypergraph, with each node being a feature that occurred in the data sample. 
Each hyperedge represents an interaction between all the linked features. 
Formally, for the feature set of each data sample input $\bm{x}=\{(o_i,w_i)\}_{i=1}^{m}$, it can be represented as a hypergraph $G_H\la \V,\E\ra$. The node set $\V=\bm{x}$.
The hyperedge set $\E$ consists of $k$ feature interactions (to be detected) that are most beneficial to the prediction accuracy (i.e., beneficial feature interactions). $\E$ can be represented as a hyperedge incidence matrix $\bm{E}\in\{0,1\}^{m\times k}$.

\subsubsection{The Overall Structure of HIRS}
HIRS contains two components: the first component performs recommendations predictions, and the second component performs s-Infomax and Infomin methods based on the three properties of beneficial feature interactions.
The recommendation prediction component of HIRS contains two modules:
one is a hyperedge prediction module that performs interaction generation; and the other is the IHGNN module that performs interaction learning. 
The recommendation prediction component takes a hypergraph without hyperedge as input, i.e., $G_H\la \V,\emptyset\ra$. 
A hyperedge prediction function $f_{gen}(\V)$ leverages the node information to generate $k$ beneficial feature interactions as the edge set $\E^{\prime}$. 
Then, IHGNN models the hypergraph with the generated edge set $G_H\la \V, \E^{\prime}\ra$ and outputs the recommendation prediction $y^{\prime}$.
The function combining the two modules is $f_{r}(\V;\bm{\rho})=f_{IHGNN}(G_H\la \V,f_{gen}(\V)\ra)$, where $\bm{\rho}$ are all the parameters in $f_{gen}$ and $f_{IHGNN}$.
While training, the other component containing s-Infomax and Infomin is conducted on the hyperedge representations ($\bm{h}$) and the graph representation ($\bm{c}$) in $f_{r}$, together with an $L_0$ activation regularization on the predicted hyperedges, to ensure the effectiveness of the interaction generation.
Figure \ref{fig:structure} demonstrates an overview of HIRS.
In the following subsections, we describe these two components in detail.

\begin{figure*}[t]
\centerline{\includegraphics[width=1.7\columnwidth]{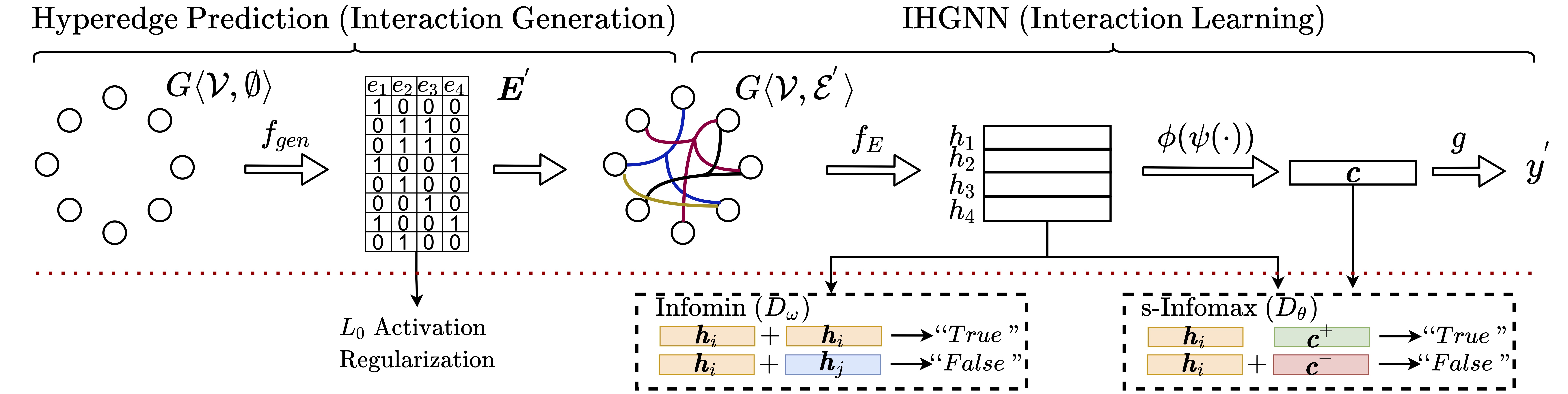}}
\caption{\small An Overview of HIRS. The part \textit{above} the red dotted line is the recommendation prediction component (RPC). The part \textit{below} the red dotted line is the s-Infomax and Infomin component and the $L_0$ activation regularization, which is only conducted during training.}
\label{fig:structure}
\end{figure*} 

\subsection{The Recommendation Prediction Component}

The recommendation prediction component (or RPC in short) of HIRS generates beneficial feature interactions via hyperedge prediction and then performs recommendation via IHGNN.

\subsubsection{Hyperedge Prediction}

The hyperedge prediction module $f_{gen}(\V)$ generates hyperedges $\E^{\prime}$ as beneficial feature interactions in an incidence matrix format $\bm{E}^{\prime}\in \{0,1\}^{m\times k}$. 
$\bm{E}^{\prime}_{ij}$ equals $1$ if the $i^{th}$ feature is in the $j^{th}$ beneficial feature interaction, and $0$ otherwise.

Specifically, each node is first mapped into a node vector of $d$ dimensions for hyperedge prediction: $\bm{V}^{e}_i=emb^{e}(o_i)\cdot w_i$, where $emb^{e}(\cdot)$ maps a feature into a $d$-dimensional embedding\footnote{$emb^{e}(\cdot)$ ensures a feature has the same embedding in different hypergraphs, and the embedding will be updated while training.}.
Then, for each node, we concatenate the vector containing the information of the node itself, and the vector summarizing the information of all other nodes in the hypergraph. The concatenated vector will be sent to an MLP to obtain a $k$-dimensional vector $\bm{u}_i$:
\begin{equation}
\label{fun:CrossRest}
\bm{u}_i = MLP([\bm{V}^{e}_i, \sum\nolimits_{j\neq i}\bm{V}^{e}_j]),
\end{equation}
where $[\cdot]$ is the concatenation operator.

Finally, we map $\bm{u}_i$ into binary values indicating the predicted hyperedge values $\bm{E}^{\prime}_i=\mathcal{B}(\bm{u}_i)$, where $\mathcal{B}$ is a Bernoulli distribution.
The predicted incidence matrix $\bm{E}^{\prime}\in \{0,1\}^{m\times k}$ is the concatenation of $\bm{E}^{\prime}_i$ of all the $m$ nodes.

However, directly representing $\bm{E}^{\prime}$ by binary values results in a non-differentiable problem while training with gradient descent methods \cite{shi2019pvae}.
Therefore, we replace the Bernoulli distribution with the hard concrete distribution \cite{maddison2016concrete}, which is differentiable and can approximate binary values.
Specifically, $\bm{u}_i$ is transferred into the hyperedge values of hard concrete distribution:
\begin{equation}
\label{fun:HardConcrete}
\small
\begin{split}
     & z \sim \mathcal{U}(0,1), \quad\quad  s_{ij} = Sig((\log z-\log (1-z)+\log(u_{ij}))/\tau), \\
     & \Bar{s}_{ij}  = s_{ij}(\delta-\gamma) + \gamma, \quad\quad \bm{E}^{\prime}_{ij}  = min(1, max(0, \Bar{s}_{ij})),
\end{split}
\end{equation}
where $z$ is a uniform distribution, $Sig$ is the Sigmoid function, $\tau$ is a temperature value and $(\gamma, \delta)$ is an interval with $\gamma<0$, $\delta>0$. 
We set $\gamma=-0.1$, $\delta=1.1, \tau=0.66$ following \citet{shi2019pvae}.

\subsubsection{IHGNN}
With the generated edge set $\E^{\prime}$ ($\bm{E}^{\prime}$ in incidence matrix form),
we propose an interaction-based hypergraph neural network (IHGNN) to learn $G_H\la\V,\E^{\prime}\ra$. 

First, we map each node in $\V$ into a node vector for IHGNN: $\bm{V}^{g}_i=emb^{g}(o_i)\cdot w_i$. Then, different from HGNN \cite{feng2019hypergraph} that models each hyperedge by linearly aggregating the node vectors, we use an MLP (a non-linear method) to model each hyperedge:
\begin{equation}
\label{fun:f_i}
\bm{h}_j = f_{E}(sum(\{\bm{V}^{g}_i|\bm{E}^{\prime}_{ij}=1\}_{i\in\V}))
\end{equation}
where $f_{E}$ is an MLP, $sum(\cdot)$ is an element-wise sum and $\bm{h}_j\in\R^{d}$ is a high-level representation of edge $j$.

The non-linearly modeling ensures IHGNN correctly and effectively models the generated interactions founded by statistical interactions \cite{su2020detecting}, which is theoretically justified in Appendix \ref{appx:prop2_proof}.
Then, each node representation is updated via aggregating the representations of the corresponding (linked) hyperedges by a function $\psi(\cdot)$ (e.g., element-wise mean). 
The hypergraph representation $\bm{c}$ is obtained via further aggregating all the updated node representations by a function $\phi(\cdot)$ that is similar to $\psi(\cdot)$. Finally, we use a readout function $g(\cdot)$ to map $\bm{c}$ to a scalar value as the hypergraph classification result.
The IHGNN function $f_{IHGNN}$ is:
\begin{equation}
\label{fun:IHGNN}
\small
f_{IHGNN}(G_H\la \V, \E^{\prime}\ra) = g(\phi(\{\psi(\{\bm{h}_j|\bm{E}^{\prime}_{ij}=1\}_{j\in\E})\}_{i \in \V})).
\end{equation}

Inspired by \citet{su2020detecting}, we leverage an $L_0$ activation regularization to help detect beneficial feature interactions via sparsifying the generated hyperedge matrix $\bm{E}^{\prime}$. 
However, the $L_0$ activation regularization solely may not lead to a successful interaction detection of arbitrary orders. 
For example, it may lead to the generated incidence matrix containing only one edge that links to all nodes and the other $k-1$ edges being empty. As a result, all interaction information assembles in one hyperedge, which is not a beneficial feature interaction.
Therefore, we exploit three properties of beneficial feature interactions in the next section to guide the beneficial feature interaction generation.

\subsection{The s-Infomax and Infomin Component}
\label{sec:three_prop}

We first define three properties of beneficial feature interactions (i.e., beneficial properties). Then we describe how we achieve the beneficial properties by s-Infomax and Infomin methods.

\subsubsection{Beneficial Properties}

We formally define the three beneficial properties that beneficial feature interactions in a data sample should try to achieve:

\noindent
\underline{\textit{Sufficiency}}:
The representations of all beneficial feature interactions together contain as much information of the input about the true output as possible. In information theory, ideally, we aim to reach:
\begin{equation}
\label{fun:sufficiency}
I((\bh_1, \bh_2,...,\bh_k);y)=I(\bm{x};y),
\end{equation}
where $I(\cdot)$ is the mutual information.

\noindent
\underline{\textit{Low Redundancy}}:
Each beneficial feature interaction representation should contain as less irrelevant information about the true output as possible.
Ideally, we aim to reach:
\begin{equation}
\label{fun:low_redundancy}
\sum\nolimits_{i=1}^{k}H(\bm{h}_i|y)=0,
\end{equation}
where $H(\bh_i|y)$ is the entropy of $\bh_i$ conditioned on $y$. 

\noindent
\underline{\textit{Disentanglement}}:
The representations of any two beneficial feature interactions contain as less duplicated information about the true output as possible. Ideally, we aim to reach:
\begin{equation}
\label{fun:disentanglement}
\sum\nolimits_{i\neq j}I(\bh_i;\bh_j;y)=0.
\end{equation}

Figure \ref{fig:three_properties} shows the relations between feature interactions and the true output according to the three properties.

We aim to generate hyperedges that can simultaneously achieve the three properties.
In HIRS, all the hyperedge representations $\{\bh_1, \bh_2,...,\bh_k\}$ are generated by a function with features $\bm{x}$ as the input, and the output $y$ is predicted through the hyperedge representations, which forms a Markov chain $\bm{x}\rightarrow \{\bh_1, \bh_2,...,\bh_k\}\rightarrow y$. 
Therefore, $I(\bm{x};y) \geqslant I((\bh_1, \bh_2,...,\bh_k);y)$. 
To achieve the sufficiency property, we just need to maximize $I((\bh_1, \bh_2,...,\bh_k);y)$ to approach the optimal situation.

Due to the non-negativity of entropy and mutual information, $\sum_{i=1}^{k}H(\bh_i|y)\geqslant 0$ and $\sum_{i\neq j}I(\bh_i;\bh_j;y)\geqslant 0$. To achieve the low redundancy and the disentanglement properties, therefore, we minimize $\sum_{i=1}^{k}H(\bh_i|y)$ and $\sum_{i\neq j}I(\bh_i;\bh_j;y)$.

Combining all the goals together, the objective function is:
\begin{equation}
\label{fun:three_properties}
\small
\max I((\bh_1, \bh_2,...,\bh_k);y) - \alpha \sum_{i=1}^{k}H(\bh_i|y) - \beta\sum_{i\neq j}I(\bh_i;\bh_j;y),
\end{equation}
where $\alpha$ and $\beta$ are scalar weights balancing the three properties.

\begin{figure}[t]
\centering
\centerline{\includegraphics[width=0.9\columnwidth]{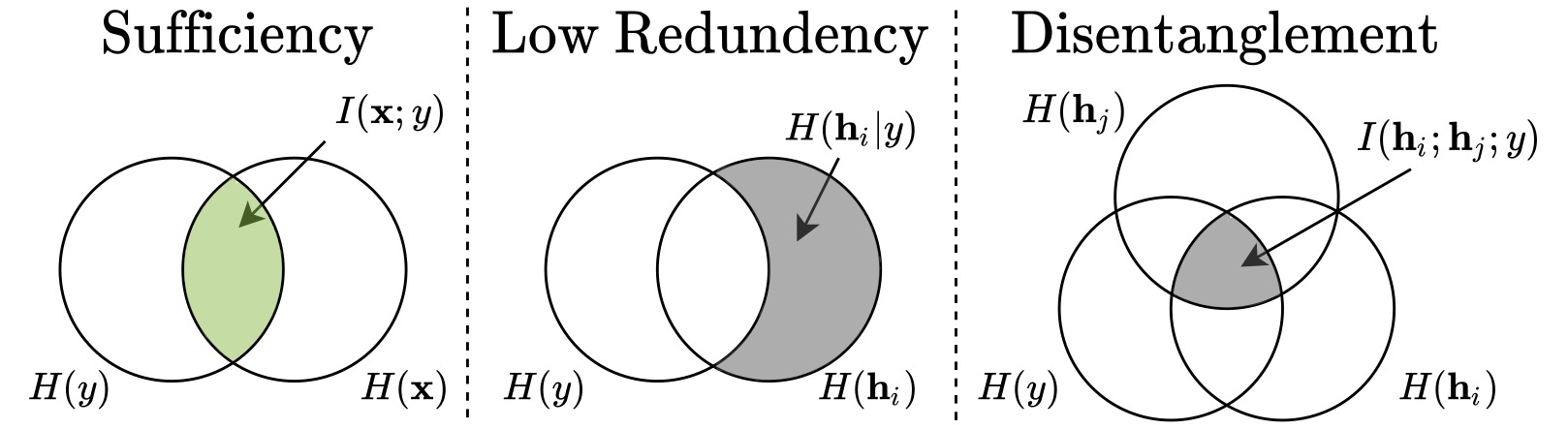}}
\caption{\small The Venn diagram of the three properties. The \textit{green} part is the information that our generated interactions to retain, and the \textit{gray} part is to discard.}  
\label{fig:three_properties}
\end{figure}

\subsubsection{s-Infomax and Infomin}

Due to the difficulty of directly optimizing Equation \eqref{fun:three_properties} \cite{hjelm2018learning}, 
we propose supervised mutual information maximization (s-Infomax) and mutual information minimization (Infomin) methods based on Deep Infomax (DIM, \citet{hjelm2018learning}) as the approximation of Equation \eqref{fun:three_properties}. 
The s-Infomax method maximizes the mutual information between each edge representation and the label representation, i.e., $\max I(\bh_i;y)$. The Infomin method minimizes the mutual information between every pair of edge representations in a hypergraph, i.e., $\min I(\bh_i;\bh_j)$. The approximation of Equation \eqref{fun:three_properties} is:
\begin{equation}
\label{fun:infomaxmin}
\small
\max \sum\nolimits_{i=1}^{k}I(\bh_i;y) - \sigma \sum\nolimits_{i\neq j}I(\bh_i;\bh_j),
\end{equation}
where $\sigma$ is a scalar weight.

We prove that Equation \eqref{fun:three_properties} reaches the optimal value when Equation \eqref{fun:infomaxmin} reaches the optimal value in Proposition \ref{prop:three_properties}. 
We detail the proof in Appendix \ref{appx:prop1_proof}.
\begin{proposition}
\label{prop:three_properties}
Consider an input-output pair $(\bm{x},y)$, where $\bm{x}$ is a set of features, a function maps $\bm{x}$ to $y$ in a Markov chain $\bm{x}\rightarrow \{\bh_1, \bh_2,...,\bh_k\}\rightarrow y$, where $\{\bh_1, \bh_2,...,\bh_k\}$ is a set of middle states mapped from $\bm{x}$, if the $\{\bh_1, \bh_2,...,\bh_k\}$ makes Equation \eqref{fun:infomaxmin} reach the optimal value, it also makes Equation \eqref{fun:three_properties} reach the optimal value.
\end{proposition}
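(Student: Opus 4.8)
The plan is to show that the optimizer of Equation~\eqref{fun:infomaxmin} also optimizes Equation~\eqref{fun:three_properties} by re-expressing each term of \eqref{fun:three_properties} in a form that is controlled by the two quantities appearing in \eqref{fun:infomaxmin}. Throughout I would use the Markov chain $\bm{x}\rightarrow\{\bh_1,\dots,\bh_k\}\rightarrow y$, which by the data-processing inequality already gives $I(\bm{x};y)\geqslant I((\bh_1,\dots,\bh_k);y)$, so the sufficiency term is maximized exactly when $I((\bh_1,\dots,\bh_k);y)$ is as large as possible, namely equal to $I(\bm{x};y)$.

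The key algebraic step is a decomposition of the joint information $I((\bh_1,\dots,\bh_k);y)$ in terms of the single-edge informations $I(\bh_i;y)$ and the pairwise (and higher-order) interaction informations. First I would note the chain-rule / inclusion--exclusion identity relating $I((\bh_1,\dots,\bh_k);y)$ to $\sum_i I(\bh_i;y)$ minus the redundant overlaps, which are exactly captured by the co-information terms $I(\bh_i;\bh_j;y)$ (and, in the general case, higher-order co-informations, which I would argue are dominated or vanish at the optimum). In parallel, I would bound the low-redundancy term $\sum_i H(\bh_i\mid y)$: since $H(\bh_i\mid y)=H(\bh_i)-I(\bh_i;y)$, driving $I(\bh_i;y)$ to its maximum (which, given that $\bh_i$ is a deterministic function of $\bm{x}$ and hence its entropy is bounded, forces $\bh_i$ to carry essentially only $y$-relevant information) makes $H(\bh_i\mid y)$ as small as possible. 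Similarly, $I(\bh_i;\bh_j;y)=I(\bh_i;\bh_j)-I(\bh_i;\bh_j\mid y)\leqslant I(\bh_i;\bh_j)$ by non-negativity of conditional mutual information, so minimizing $\sum_{i\neq j}I(\bh_i;\bh_j)$ in \eqref{fun:infomaxmin} forces $\sum_{i\neq j}I(\bh_i;\bh_j;y)$ down to its floor of $0$. Assembling these three observations, an optimizer of \eqref{fun:infomaxmin} simultaneously maximizes the first term of \eqref{fun:three_properties} and minimizes the second and third (both to their non-negative lower bound $0$), hence optimizes \eqref{fun:three_properties}.

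The subtlety to handle carefully — and what I expect to be the main obstacle — is the interplay between the three objectives: a priori, maximizing $\sum_i I(\bh_i;y)$ and minimizing $\sum_{i\neq j}I(\bh_i;\bh_j)$ could conflict, and one must argue that the set of maximizers of \eqref{fun:infomaxmin} is nonempty and that on it \emph{all three} target quantities are at their individual optima, rather than merely at some Pareto-balanced point. I would resolve this by exhibiting (or invoking the existence of) a configuration in which the information content of $\bm{x}$ about $y$ is partitioned across the $\bh_i$ with no overlap and no $y$-irrelevant content — e.g., a deterministic assignment that splits $I(\bm{x};y)$ into disjoint pieces — which makes $\sum_{i\neq j}I(\bh_i;\bh_j)=0$, $\sum_i H(\bh_i\mid y)=0$, $\sum_i I(\bh_i;y)=I(\bm{x};y)$, and $I((\bh_1,\dots,\bh_k);y)=I(\bm{x};y)$ all hold at once; then any optimizer of \eqref{fun:infomaxmin} must match this value and therefore inherits all the equalities. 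A secondary technical point is justifying that the higher-order co-information terms (beyond pairwise) do not obstruct the equality $I((\bh_1,\dots,\bh_k);y)=\sum_i I(\bh_i;y)-\sum_{i<j}I(\bh_i;\bh_j;y)$; at the disentangled optimum these vanish because the $\bh_i$ are mutually independent given $y$ and carry disjoint information, so the inclusion--exclusion collapses cleanly. I would keep the detailed verification of these vanishing terms for the appendix and present here only the structural argument.
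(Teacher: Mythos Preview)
Your proposal is correct and follows essentially the same route as the paper's proof: both use $I(\bh_i;\bh_j;y)\leqslant I(\bh_i;\bh_j)$ to kill the disentanglement term, $H(\bh_i\mid y)=H(\bh_i)-I(\bh_i;y)$ to handle low redundancy, and the collapse $I((\bh_1,\dots,\bh_k);y)=\sum_i I(\bh_i;y)$ at the pairwise-independent optimum to recover sufficiency. Your treatment is in fact more careful than the paper's appendix on two points it glosses over --- the existence of a configuration simultaneously attaining all three optima, and the vanishing of higher-order co-information terms (the paper tacitly assumes pairwise independence suffices for the additivity step).
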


Proposition \ref{prop:three_properties} ensures that by optimizing Equation \eqref{fun:infomaxmin}, we approximately optimize Equation \eqref{fun:three_properties} so that generate feature interactions that fit the three properties.

\subsubsection{The Implementation}

One difficulty of applying s-Infomax is that the label space only has two states, i.e., 0 and 1, which are discrete and are too small to provide enough output information \cite{khosla2020supervised}. 
Hence, we use the hypergraph representations to represent the label space, extending the two states into a continuous high-dimensional space. 
Following DIM, we sample the \textit{joint distribution} of $I(\bh_i;y)$ as the pair of $\bh_i$ and a random hypergraph representation $\bm{c}^{+}$ from training samples having the same label $y$. Then, we sample the \textit{marginal distribution} as $\bh_i$ and a random hypergraph representation $\bm{c}^{-}$ from training samples having the opposite label $y$. 
We use a GAN-based objective function for s-Infomax:
\begin{equation}
\label{fun:infomax_loss}
\small
\begin{split}
\max\sum\nolimits_{i}I(\bh_i;y) = &\max\frac{1}{k}\sum\nolimits_i (\log D_{\theta}(\bh_i, \bm{c}^{+}) \\
&+(1-\log D_{\theta}(\bh_i,\bm{c}^{-}))),
\end{split}
\end{equation}
where $D_{\theta}\in\R^{2\times d}\rightarrow\R$ is a discriminator that distinguishes the joint distribution and the marginal distribution.

In Infomin, we minimize the mutual information between different hyperedge representations in a hypergraph. We set the sampled joint distribution as $(\bh_i, \bh_j)$ that $i\neq j$, and the marginal distribution as $(\bh_i, \bh_i)$.
To avoid the overfitting issue that the discriminator in Infomin may trivially compare whether two input vectors are the same, inspired by \citet{gao2021simcse}, we use a dropout function $f_a$ to prevent each pair of input vectors from being the same.
 
To ensure efficiency, we get one sample from the joint distribution and one sample from the marginal distribution for each node. 
The objective function for Infomin is also a GAN-based objective function but swaps the joint and marginal distributions:
\begin{equation}
\label{fun:infomin_loss}
\small
\begin{split}
\min \sum_{i\neq j}I(\bh_i;\bh_j) = &\max \frac{1}{k}\sum_{i}(\log D_{\omega}(f_a(\bh_i) f_a(\bh_i)) \\
+&(1-\log D_{\omega}(f_a(\bh_i),f_a(\bh_j)))).
\end{split}
\end{equation}

Equation \eqref{fun:infomax_loss} and Equation \eqref{fun:infomin_loss} can be efficiently optimized by training the discriminators $D_{\theta}$ and $D_{\omega}$ through Binary Cross-entropy (BCE) loss \cite{velivckovic2018deep}.

Note that the main difference between s-Infomax and Infomin in implementation is the choice of the joint distribution and the marginal distribution that a discriminator to distinguish. 
The distributions chosen in Infomin are intuitive: we want any two hyperedge representations in a hypergraph to be as different as possible (i.e., have minimum mutual information).

\subsection{The Empirical Risk Minimization Function}
The empirical risk minimization function of HIRS contains the loss function of recommendation predictions, an $L_0$ activation regularization, the loss functions of the s-Infomax discriminator and the Infomin discriminator:
\begin{equation}
\label{fun:model_loss}
\small
\begin{split}
    \mathcal{R}(\bm{\theta}, \bm{\omega}, \bm{\rho})=&\frac{1}{N}\sum_{n=1}^{N}(\mathcal{L}(f_{r}((G_H)_n;\bm{\rho}),y_{n}) +\lambda_1 \norm{\bm{E}^{\prime}_{n}}_0 \\
    &+ \lambda_2 \mathcal{L}_D(D_{\theta}) + \lambda_3 \mathcal{L}_D(D_{\omega})), \\
    \bm{\rho}^{*}&,\bm{\theta}^{*}, \bm{\omega}^{*}=\argmin_{\bm{\theta},\bm{\omega},\bm{\rho}} {\mathcal{R}(\bm{\theta}, \bm{\omega},\bm{\rho})}, 
\end{split}
\end{equation}
where $(G_H)_n=(G_H\la \V,\emptyset\ra)_n$, $\lambda_1$, $\lambda_2$ and $\lambda_3$ are weight factors, $\mathcal{L}(\cdot)$ and $\mathcal{L}_D(\cdot)$ are both BCE loss, and $\bm{\rho}^{*}$, $\bm{\theta}^{*}$, $\bm{\omega}^{*}$ are final parameters. Following \citet{shi2019pvae}, the $L_0$ activation regularization can be calculated as: $\norm{\bm{E}^{\prime}_{n}}_0 = \sum_{i,j} Sig(\log u_{ij} - \tau \log \frac{-\gamma}{\delta}).$

\subsection{Generalizing IHGNN to Existing Methods}
We show that some existing methods can be represented as an IHGNN in specific settings.

\subsubsection{$L_0$-SIGN as IHGNN}
A hypergraph is an extension of a standard graph. Therefore, GNN-based recommender systems such as $L_0$-SIGN \cite{su2020detecting} and Fi-GNN \cite{li2019fi} can be easily applied by IHGNN by changing the edge matrix into the incidence matrix form and replacing the corresponding feature interaction modeling functions, e.g., dot product by an MLP for $L_0$-SIGN.

\subsubsection{FM as IHGNN}
FM \cite{rendle2010factorization} models every pairwise feature interaction by dot product and sums up all the modeling results. We can regard each pairwise feature interaction in FM as a hyperedge linking to two nodes,
which forms the $\E^{FM}$ that contains all pairwise node combinations, including all self-loops as the point-wise terms in FM.
In addition, FM-based extensions can also be achieved based on the above derivation. 
For example, NFM \cite{he2017neural} can be achieved by replacing the dot product and the $\phi$ by an element-wise production and an MLP, respectively.

\subsubsection{DeepFM as IHGNN}
DeepFM \cite{guo2017deepfm} merges the prediction of FM and the prediction of an MLP modeling all the features. We can regard DeepFM as IHGNN by adding a hyperedge linking to all nodes in addition to $\E^{FM}$. 
Since DeepFM uses different interaction modeling functions in FM and the MLP, the hyperedges are modeled differently according to the edge degree.

\subsection{Interaction Generation Cost Analysis}
\label{sec:time_complexity}

HIRS generates $k$ beneficial feature interactions through the hyperedge prediction module $f_{gen}$, which has the time complexity of $O(mkd)$ for a data sample of $m$ features, where $d$ is the embedding dimension.
The s-Infomax and the Infomin methods help the interaction generation and both have the complexity of $O(kd)$. 
In addition, the IHGNN module models the generated feature interactions for recommendation with a $O(mkd+md)$ complexity. 
Therefore, the time complexity of HIRS is $O(mkd)$.
Existing studies model or examine all the possible interactions by enumerating them $O(2^md)$ \cite{liu2020autofis} or stacking pairwise interaction modeling layers $O(m^3d)$ \cite{song2019autoint,wang2021dcn} when considering arbitrary orders.
Since $k$ can be set to be a much smaller number (e.g., 40) than $m^2$ (e.g., 196 when $m=14$), HIRS achieves a more efficient interaction modeling.
We do not include the complexity of the neural network modules in HIRS and the baselines as they are all linear to $d$.
The running time of HIRS and the baselines are detailed in Appendix \ref{appx:time_complexity}. 

\section{Experimental Analysis}
\label{sec:experiences}

In this section, we evaluate the following aspects of our proposed model: 
(i) the recommendation prediction accuracy of HIRS; 
(ii) the beneficial feature interactions generation ability of HIRS; 
(iii) the impact of different components and hyperparameters of HIRS.

\subsection{Experimental Setting}

\subsubsection{Datasets and Baselines}

\begin{table}[t]
\centering
\begin{tabular}{lcccc}
\hline
\textbf{Dataset} & \textbf{\#Data} & \textbf{\#Users} & \textbf{\#Items} &\textbf{\#Features}\\
\hline
MovieLens 1M     & 1,149,238 & 5,950  &  3,514 &  6,974 \\
Book-Crossing & 1,050,834 & 4,873  & 53,168 & 43,244 \\
MovieLens 25M   &  31,147,118 &  162,541  &  62,423 &  226,122 \\
\hline
\end{tabular}
\caption{\small Dataset Statistics.}
\label{tab:dataset}
\vskip -0.24in
\end{table}

We evaluate HIRS on three real-world datasets in recommender systems: MovieLens 1M \cite{harper2015movielens}, Book-crossing \cite{ziegler2005improving}, and MovieLens 25M \cite{harper2015movielens}. 
We summarize the statistics of the datasets in Table \ref{tab:dataset}.
Note that although MovieLens 1M and MovieLens 25M are both movie recommendation datasets, they are actually deployed with different features, e.g., MovieLens 25M contains the tag features, while MovieLens 1M does not. In this way, we can evaluate the effectiveness and the robustness of our model when different kinds of features are available in similar scenarios. 
We compare HIRS with state-of-the-art models considering feature interactions for recommendation.
They are
FM \cite{koren2008factorization},
AFM \cite{xiao2017attentional},
NFM \cite{he2017neural},
Fi-GNN \cite{li2019fi},
$L_0$-SIGN \cite{su2020detecting},
AutoInt \cite{song2019autoint},
DeepFM \cite{guo2017deepfm},
xDeepFM \cite{lian2018xdeepfm},
DCNv2 \cite{wang2021dcn},
AutoFIS \cite{liu2020autofis},
AFN \cite{cheng2020adaptive}.
More details about the datasets and the baselines are in Appendix \ref{appx:dataset_baseline}.

\begin{table*}[ht]
\centering
\begin{tabular}{l|c|c|cccc|cccc|cccc}
\hline
 &\textbf{HO}&\textbf{BD}& \multicolumn{4}{c|}{\textbf{MovieLens 1M}} & \multicolumn{4}{c|}{\textbf{Book-Crossing}} & \multicolumn{4}{c}{\textbf{MovieLens 25M}}\\
 && & R@10 & R@20 & N@10 & N@20 & R@10 & R@20 & N@10 & N@20 & R@10 & R@20 & N@10 & N@20\\
\hline
FM          &       &       & 0.8527 & 0.8996 & 0.8521 & 0.8845 & 0.7802 & 0.8834 & 0.8123 & 0.8539 & 0.8331 & 0.8972 & 0.9535 & 0.8791 \\
AFM         &       &       & 0.8738 & 0.9176 & 0.8734 & 0.8994 & 0.8084 & 0.8892 & 0.8473 & 0.8779 & 0.8484 & 0.9082 & 0.8618 & 0.8845 \\
NFM         &       &       & 0.8979 & 0.9318 & 0.8925 & 0.9084 & 0.8625 & 0.9174 & 0.8492 & 0.8843 & 0.8604 & 0.9170 & 0.8620 & 0.8903\\			
\hline
Fi-GNN      &       &       & 0.9060 & 0.9341 & 0.9091 & 0.9226 & 0.8818 & 0.9262 & 0.8689 & 0.8861 & 0.8692 & 0.9124 & 0.8667 & 0.8905 \\
$L_0$-SIGN  &       &$\surd$& \underline{0.9092} & \underline{0.9386} & \underline{0.9176} & \underline{0.9268} & 0.8817 & 0.9266 & \underline{0.8723} & 0.8914 & 0.8729 & 0.9261 & 0.8704 & 0.8915 \\
\hline
AutoInt     &$\surd$&       & 0.8991 & 0.9318 & 0.9077 & 0.9118 & 0.8739 & 0.9263 & 0.8636 & 0.8894 & 0.8707 & 0.9233 & 0.8692 & 0.8899 \\
DeepFM      &$\surd$&       & 0.8975 & 0.9308 & 0.9074 & 0.9128 & 0.8765 & 0.9245 & 0.8648 & 0.8800 & 0.8667 & 0.9213 & 0.8650 & 0.8883 \\
xDeepFM     &$\surd$&       & 0.9049 & 0.9342 & 0.9134 & 0.9114 & 0.8791 & 0.9249 & 0.8692 & 0.8826 & 0.8673 & 0.9226 & 0.8696 & 0.8902 \\
DCNv2      &$\surd$&       & 0.9014 & 0.9311 & 0.9079 & 0.9104 & 0.8773 & 0.9234 & 0.8671 & 0.8858 & 0.8678 & 0.9214 & 0.8667 & 0.8890 \\
AutoFIS     &$\surd$&$\surd$& 0.9028 & 0.9317 & 0.9096 & 0.9208 & 0.8804 & 0.9227 & 0.8706 & 0.8887 & 0.8702 & 0.9228 & 0.8712 & 0.8922 \\
AFN         &$\surd$&$\surd$& 0.9032 & 0.9333 & 0.9112 & 0.9232 & \underline{0.8821} & \underline{0.9285} & 0.8710 & \underline{0.8917} & \underline{0.8743} & \underline{0.9271} & \underline{0.8737} & \underline{0.8938} \\
\hline
HIRS       &$\surd$&$\surd$& \textbf{0.9545} & \textbf{0.9663} & \textbf{0.9464} & \textbf{0.9500} & \textbf{0.9279} & \textbf{0.9584} & \textbf{0.9081} & \textbf{0.9211} & \textbf{0.9223} & \textbf{0.9545} & \textbf{0.8846} & \textbf{0.9011} \\
\hline
\textit{Improv.}  &  &  & 4.98\% & 2.95\% & 3.13\% & 2.50\% & 5.19\% & 3.22\% & 4.10\% & 3.30\% & 5.49\% & 2.96\% & 1.26\% & 0.81\%\\
\textit{p-value}  &  &  & 0.5\%  & 0.5\%  & 0.5\%  & 0.5\%  & 0.5\%  & 0.5\%  & 0.5\%  & 0.5\%  & 0.5\% & 0.5\% &  1.4\% & 3.7\% \\       
\hline
\end{tabular}
\caption{\small Comparing the prediction performance of HIRS with the baselines. \textbf{HO} and \textbf{BD} indicate whether the model can consider high-order feature interactions and perform beneficial feature interaction detection, respectively. R@j refers to Recall, and N@j refers to NDCG.}
\label{tab:performance}
\vskip -0.1in
\end{table*}

\subsubsection{Experimental Setup}
In the experiments, we use element-wise mean as the linear aggregation function for both $\psi(\cdot)$ and $\phi(\cdot)$. $g(\cdot)$ is a linear regression function.
The MLPs for $f_{gen}(\cdot)$ and $f_{E}$ consist of a fully connected layer of size 64 with a ReLU activation. 
Both $D_{\theta}$ and $D_{\omega}$ are Bilinear models.
The dropout probability of $f_a$ is 0.1.
The node embedding size for interaction modeling and edge prediction is 64.
The hyperedge number $k$ is 40. 
We set $\lambda_1=0.02$, $\lambda_2=1$, $\lambda_3=0.1$, and Adam \cite{kingma2014adam} as the optimization algorithm.
We randomly split each dataset into training, validation, and test sets with a proportion of 70\%, 15\%, and 15\%, respectively.  
We evaluate the performance of HIRS and the baseline models using the popular ranking metrics Recall@10, Recall@20, NDCG@10, and NDCG@20.

\subsection{Model Performance}
Table \ref{tab:performance} shows the performance of our model and the baseline models. The results are the average of 10 runs. The best results are in bold, and the best baseline results are underlined. 
The rows \textit{Improv} (standing for Improvements) and \textit{p-value} show the improvement and statistical significance test results (through Wilcoxon signed-rank test \cite{wilcoxon1992individual}) of HIRS to the best baseline results, respectively.

We observe that:
(i) Our model outperforms all the baselines on the three datasets (up to 5.49\%), and the significance test results are all lower than the threshold of 5\%, which show that the improvements are significant. 
(ii) For the models that only consider pairwise feature interactions, GNN-based models (e.g., Fi-GNN and $L_0$-SIGN) perform better than other baselines (e.g., NFM). This shows the ability of GNNs in interaction modeling.
(iii) Models considering higher-order feature interactions (e.g., xDeepFM and AFN) perform better than those only consider pairwise interactions (e.g., NFM and Fi-GNN). Thus, considering high-order feature interactions is beneficial for providing better prediction accuracy.
(iv) Detecting beneficial feature interactions is vital in removing noise interactions and improving prediction accuracy (e.g., $L_0$-SIGN v.s. Fi-GNN).
(v) Our model combines GNN-based interaction modeling, higher-order interaction modeling, and beneficial feature interaction detection, and achieves the best performance.

\begin{figure}[t]
\centering
\begin{tikzpicture}
\begin{axis}[barstyle, width=0.5\textwidth, height=3.3cm, legend columns=3, ylabel={Percent}, xlabel={Feature Interaction Order}, symbolic x coords = {0,1,2,3,4,5,6,7,8,9,10,11,12,13,14}, legend style={draw=none, at={(0.25,0.85)},anchor=west, nodes={scale=1, transform shape}, legend image post style={scale=0.7}},ymin=0, ymax=0.85]
\addplot [y filter/.expression={y==0 ? nan : y},color=mycolor4, fill=mycolor4] coordinates { 
(0, 0.12)
(1, 0.18)
(2, 0.14)
(3, 0.10)
(4, 0.04)
(5, 0.02)
(6, 0.01)
(7, 0.02)
(8, 0.04)
(9, 0.07)
(10, 0.07)
(11, 0.06)
(12, 0.05)
(13, 0.05)
(14, 0.03)
};
\addplot[y filter/.expression={y==0 ? nan : y},color=mycolor1, fill=mycolor1] coordinates { 
(0, 0.84)
(1, 0.01)
(2, 0.00)
(3, 0.00)
(4, 0.00)
(5, 0.00)
(6, 0.00)
(7, 0.00)
(8, 0.00)
(9, 0.00)
(10, 0.00)
(11, 0.00)
(12, 0.02)
(13, 0.04)
(14, 0.09)
};
\addplot[y filter/.expression={y==0 ? nan : y}, color=mycolor2, fill=mycolor2] coordinates { 
(0, 0.00)
(1, 0.00)
(2, 0.00)
(3, 0.00)
(4, 0.00)
(5, 0.00)
(6, 0.00)
(7, 0.00)
(8, 0.01)
(9, 0.04)
(10, 0.13)
(11, 0.24)
(12, 0.29)
(13, 0.21)
(14, 0.08)
};
\legend{HIRS, w/o MI, w/o $L_0$}
\end{axis}
\end{tikzpicture}
\vskip -0.10in
\caption{\small Comparing the distribution of the orders of the generated feature interactions from HIRS, HIRS without s-Infomax and Infomin, and HIRS without $L_0$ regularization for MovieLens 1M. }
\label{fig:influence_infomax_edgeprediction}
\vskip -0.10in
\end{figure}
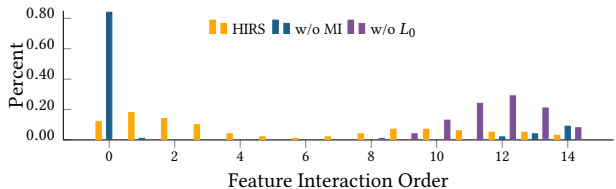

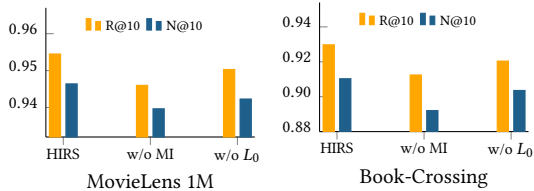
\begin{figure}[t]
\centering
\begin{tikzpicture}
\begin{axis}[barstyle, xlabel={MovieLens 1M},bar width=1.5mm, symbolic x coords = {HIRS, w/o MI, w/o $L_0$}, legend style={draw=none, at={(0.15,0.85)},anchor=west, nodes={scale=0.9, transform shape}, legend image post style={scale=0.6}},ymin=0.932, ymax=0.967]
\addplot [color=mycolor4, fill=mycolor4] coordinates { 
(HIRS, 0.9545)
(w/o MI, 0.9460)
(w/o $L_0$, 0.9503)
};
\addplot[color=mycolor1, fill=mycolor1] coordinates { 
(HIRS, 0.9464)
(w/o MI, 0.93965)  
(w/o $L_0$,  0.9423) 
};
\legend{R@10, N@10}
\end{axis}
\end{tikzpicture}
\begin{tikzpicture}
\begin{axis}[barstyle, xlabel={Book-Crossing},bar width=1.5mm, symbolic x coords = {HIRS, w/o MI, w/o $L_0$}, legend style={draw=none, at={(0.15,0.85)},anchor=west, nodes={scale=0.9, transform shape}, legend image post style={scale=0.6}},ymin=0.88, ymax=0.954]
\addplot [color=mycolor4, fill=mycolor4] coordinates { 
(HIRS, 0.929728)
(w/o MI, 0.91245)
(w/o $L_0$, 0.920415)
};
\addplot[color=mycolor1, fill=mycolor1] coordinates { 
(HIRS, 0.910294)         
(w/o MI, 0.89205)  
(w/o $L_0$,  0.903591) 
};
\legend{R@10, N@10}
\end{axis}
\end{tikzpicture}
\vskip -0.10in
\caption{\small Comparing the performance of HIRS, HIRS without s-Infomax and Infomin, and HIRS without $L_0$ regularization.}
\label{fig:influence_infomax_performance}
\end{figure}

\subsection{Effectiveness of the Beneficial Properties}
\label{subsec:inter_detec_eval}
To ensure our model can detect beneficial feature interactions that fit the beneficial properties (sufficiency, low-redundancy, and disentanglement), we propose the s-Infomax and Infomin module and the $L_0$ activation regularization.
This section, we evaluate the influence of them on the beneficial feature interaction detection and the prediction accuracy.

We run two variants of HIRS: (i) without s-Infomax and Infomin (\textit{w/o MI}), and (ii) without the $L_0$ activation regularization (\textit{w/o $L_0$}). 
Figure \ref{fig:influence_infomax_edgeprediction} shows the distribution of the generated feature interactions on three settings (the full model and the two variants).
We observe that when using the variant \textit{w/o MI},
most of the generated hyperedges are empty (order 0), and a few hyperedges connect to most features (order 12-14).
It proves the importance of the three properties on arbitrary order beneficial interaction generation. 
When using the variant \textit{w/o $L_0$}, most of the generated interactions have relative higher orders than using our full model. This proves that $L_0$ regularization helps ease the issue of mixed interaction in each hyperedge. 
Next, we show the prediction accuracies of the two variants and the full model in Figure \ref{fig:influence_infomax_performance}.
We show the results of Recall@10 and NDCG@10. Similar conclusions can be drawn from Recall@20 and NDCG@20 (the same as the remaining experiments).
We can see that both \textit{w/o MI} and \textit{w/o $L_0$} reduce the prediction accuracy. 
This proves that our full model successfully generates more beneficial feature interactions than the two variants and hence gains the best performance.

\subsection{Ablation Study on RPC Modules}

We evaluate the impact of two key functions, the hyperedge detection function $f_{gen}$ and the non-linear interaction modeling function $f_E$, in the two modules (the hyperedge prediction module and the IHGNN module) of the recommendation prediction component (RPC), respectively.
To evaluate the impact of $f_{gen}$, we remove $f_{gen}$ and use a fully connected matrix as the hyperedge matrix, i.e., $\bm{E}^{\prime}=\bm{1}$. 
We name this variant as \emph{w/o HP}.
To evaluate $f_E$ in HIRS, we replace $f_E$ by a linear function, which simply remove the ReLU activation in $f_E$.
We name this variant as \emph{w/o NM}.

We summarize the results in Figure \ref{fig:influence_component_performance}. 
The variant \emph{w/o HP} achieves an inferior performance.
In this situation, the hyperedge does not indicate a feature interaction, but all the features. 
Therefore, the interaction modeling function does not model beneficial feature interactions but the mix of all the feature information, which inevitably contains the noise information that decreases the prediction accuracy.
The variant \emph{w/o NM} also achieves an inferior performance. This is because the non-linear interaction modeling helps correctly model the generated feature interactions and hence improves the accuracy. This is consistent with the claim in \citet{su2020detecting}.
We also show the results of the variant with both modifications (\emph{w/o Both}). The worst performance it achieves is not surprising since the whole model under this setting can be regarded as a linear mapping from feature embeddings to the final prediction.

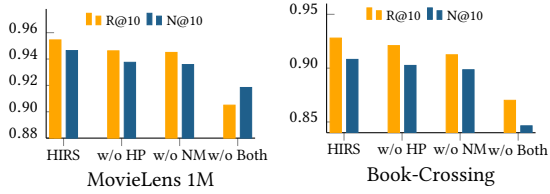
\begin{figure}[t]
\centering
\begin{tikzpicture}
\begin{axis}[barstyle, xlabel={MovieLens 1M},bar width=1.5mm, symbolic x coords = {HIRS, w/o HP, w/o NM, w/o Both}, legend style={draw=none, at={(0.15,0.90)},anchor=west, nodes={scale=0.9, transform shape}, legend image post style={scale=0.6}},ymin=0.88, ymax=0.978]
\addplot [color=mycolor4, fill=mycolor4] coordinates { 
(HIRS,     0.9545)
(w/o HP,  0.9462)
(w/o NM,  0.9450)
(w/o Both,0.9050)
};
\addplot[color=mycolor1, fill=mycolor1] coordinates { 
(HIRS,     0.9464)
(w/o HP,  0.9375)
(w/o NM,  0.9358)
(w/o Both,0.9184)
};
\legend{R@10, N@10}
\end{axis}
\end{tikzpicture}
\begin{tikzpicture}
\begin{axis}[barstyle, xlabel={Book-Crossing},bar width=1.5mm, symbolic x coords = {HIRS, w/o HP, w/o NM, w/o Both}, legend style={draw=none, at={(0.15,0.90)},anchor=west, nodes={scale=0.9, transform shape}, legend image post style={scale=0.6}},ymin=0.84, ymax=0.96]
\addplot [color=mycolor4, fill=mycolor4] coordinates { 
(HIRS,     0.9279)
(w/o HP,  0.9209)
(w/o NM,  0.9124)
(w/o Both,0.8701)
};
\addplot[color=mycolor1, fill=mycolor1] coordinates { 
(HIRS,     0.9081)
(w/o HP,  0.9025)
(w/o NM,  0.8985)
(w/o Both,0.8464)
};
\legend{R@10, N@10}
\end{axis}
\end{tikzpicture}
\vskip -0.1in
\caption{\small Comparing the recommendation performance of HIRS, HIRS without hyperedge prediction, HIRS without non-linear interaction modeling, and HIRS without both.}
\label{fig:influence_component_performance}
\vskip -0.1in
\end{figure}

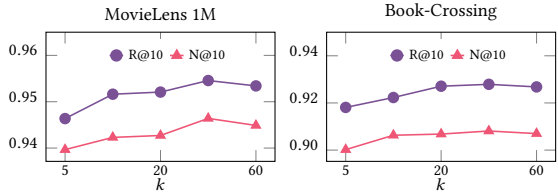
\begin{figure}[t]
\centering
\begin{tikzpicture}
\begin{axis}[linestyle, title =MovieLens 1M, xlabel={$k$},legend columns=2, ymax=0.965, symbolic x coords={5, 10, 20, 40, 60}, legend style={draw=none, at={(0.84,0.82)},anchor=east, nodes={scale=0.9, transform shape}}, legend image post style={scale=0.6}]
\addplot[mark=*, color=mycolor2] coordinates {
(5, 0.94636 )
(10, 0.951621 )
(20, 0.952079 )
(40, 0.95457 )
(60, 0.95341 )
};
\addplot[mark=triangle*,color=mycolor3] coordinates {
(5, 0.939662)
(10, 0.942298)
(20, 0.942713)
(40, 0.94640)
(60, 0.94488)
};
\legend{R@10, N@10}
\end{axis}
\end{tikzpicture}
\begin{tikzpicture}
\begin{axis}[linestyle, title=Book-Crossing, xlabel={$k$},legend columns=2, ymax=0.950, symbolic x coords={5, 10, 20, 40,60}, legend style={draw=none, at={(0.84,0.82)},anchor=east, nodes={scale=0.9, transform shape}}, legend image post style={scale=0.6}]
\addplot[mark=*, color=mycolor2] coordinates {
(5,  0.9181)
(10, 0.9223)
(20, 0.9271)
(40, 0.9279)
(60, 0.9268)
};
\addplot[mark=triangle*,color=mycolor3] coordinates {
(5,  0.9002)
(10, 0.9063)
(20, 0.9068)
(40, 0.9081)
(60, 0.9070)
};
\legend{R@10, N@10}
\end{axis}
\end{tikzpicture}
\vskip -0.1in
\caption{\small Comparing the recommendation performance of HIRS with different hyperedge numbers.}
\label{fig:para_study_k}
\vskip -0.1in
\end{figure}

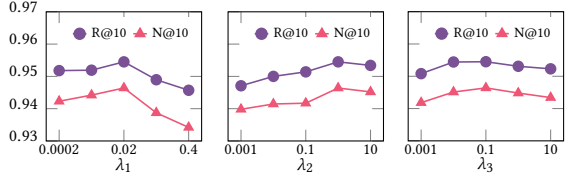
\begin{figure}[t]
\centering
\begin{tikzpicture}
\begin{axis}[linestyle,width=0.205\textwidth, xlabel={$\lambda_1$},legend columns=2, ymax=0.970, ymin=0.930, symbolic x coords={0.0002, 0.002, 0.02, 0.2, 0.4}, legend style={draw=none, at={(1,0.830)},anchor=east, nodes={scale=0.9, transform shape}}, legend image post style={scale=0.6}]
\addplot[mark=*, color=mycolor2] coordinates {
(0.0002,  0.95179)
(0.002, 0.95192)
(0.02,  0.95450 )
(0.2,  0.94896 )
(0.4, 0.94570)
};
\addplot[mark=triangle*,color=mycolor3] coordinates {
(0.0002, 0.9423 )
(0.002, 0.9442)
(0.02, 0.94641 )
(0.2, 0.93872 )
(0.4, 0.93420 )
};
\legend{R@10, N@10}
\end{axis}
\end{tikzpicture}
\begin{tikzpicture}
\begin{axis}[linestyle, width=0.205\textwidth, yticklabels={,,}, xlabel={$\lambda_2$},legend columns=2, ymax=0.970,ymin=0.930, symbolic x coords={0.001, 0.01, 0.1, 1, 10}, legend style={draw=none, at={(1,0.830)},anchor=east, nodes={scale=0.9, transform shape}}, legend image post style={scale=0.6}]
\addplot[mark=*, color=mycolor2] coordinates {
(0.001, 0.94709)
(0.01,  0.9500)
(0.1,   0.9514)
(1,     0.9545)
(10,    0.9534)
};
\addplot[mark=triangle*,color=mycolor3] coordinates {
(0.001, 0.93981)
(0.01,  0.94145)
(0.1,   0.94170)
(1,     0.9464)
(10,    0.94519)
};
\legend{R@10, N@10}
\end{axis}
\end{tikzpicture}
\begin{tikzpicture}
\begin{axis}[linestyle, width=0.205\textwidth, yticklabels={,,}, xlabel={$\lambda_3$},legend columns=2, ymax=0.970,ymin=0.930, symbolic x coords={0.001, 0.01, 0.1, 1, 10}, legend style={draw=none, at={(1,0.830)},anchor=east, nodes={scale=0.9, transform shape}}, legend image post style={scale=0.6}]
\addplot[mark=*, color=mycolor2] coordinates {
(0.001, 0.950808 )
(0.01,  0.954417 )
(0.1, 0.9545)
(1,  0.95309 )
(10, 0.952301 )
};
\addplot[mark=triangle*,color=mycolor3] coordinates {
(0.001, 0.94183)
(0.01,  0.945116 )
(0.1, 0.9464)
(1, 0.9448 )
(10,0.94340 )
};
\legend{R@10, N@10}
\end{axis}
\end{tikzpicture}
\vskip -0.1in
\caption{\small Comparing the performance of different weight values of the $L_0$ activation regularization, s-Infomax, and Infomin for MovieLens 1M.}
\label{fig:para_study_weights}
\vskip -0.1in
\end{figure}

\subsection{Parameter Study}
We evaluate the impact of the number of feature interactions ($k$), the weight values for $L_0$ activation regularization ($\lambda_1$), s-Infomax ($\lambda_2$), and Infomin ($\lambda_3$) in Equation \eqref{fun:model_loss} on the prediction accuracy. 
Figure \ref{fig:para_study_k} shows the performance of HIRS using different numbers of edges ($k$).
The prediction accuracy increases when the $k$ increases from 5 to 40, but flattens over 40. 
This is because that when the edge number is small, some of the beneficial  feature interactions are discarded or mixed in one hyperedge. 
However, the number of edge slots is enough to contain beneficial feature interactions when $k>40$.
Next, we evaluate different $\lambda_1, \lambda_2, \lambda_3$ values in Equation \eqref{fun:model_loss} on the MovieLens 1m dataset. 
As shown in Figure \ref{fig:para_study_weights},
every weight has a sweat point that leads to the best performance. 
For the $L_0$ regularization weight ($\lambda_1$), it balances the sparsity and the useful information retraining of the beneficial feature interactions. 
s-Infomax and Infomin balance the three properties. The sufficiency and the low redundancy are achieved by s-Infomax, and the disentanglement is achieved by Infomin. For example, a high s-Infomax weight ensures sufficiency and low redundancy, but may harm the disentanglement. Only when the three properties achieve simultaneously we can successfully generate beneficial feature interactions, and leverage them for accurate recommendations.

\subsection{Case Study on the Detected Interactions}

We do a case study to visualize the detected feature interactions. Figure \ref{fig:case_study} shows the detected feature interactions (in an incidence matrix) of a data sample in the MovieLens 1M dataset, where empty ones were removed. Each column is a feature interaction with corresponding features colored. We sort the feature interactions by order and set them with different colors based on their order, e.g., dark blue for second order feature interactions and purple for 13 order feature interactions (the number of colored cells in each column indicates the order of the corresponding interaction).
From the figure, we observe that: 
(i) Feature interactions provide potential explanations for the recommendation prediction results. For example, interaction 13 shows the interaction between age and comedy genre, which indicates that users aged 25-35 may prefer comedy movies. 
(ii) Features such as gender and age participate in many feature interactions, which shows the importance of these features in the prediction.
(iii) User id (\textit{e.g., User\_212}) and movie id (e.g., \textit{Naked Gun}) usually appears in high order feature interactions (e.g., interaction 30-34). This indicates that the id features can be used to infer complex preferences or properties.
(iv) There are a few duplicated feature interactions, e.g., interactions 13 and 17. It means that these interactions are important. Having them duplicated adds the weights of these interactions in recommendation predictions.

Due to the space limitation, we present some experimental results for the Book-Crossing dataset and the MovieLens 25M dataset in Appendix \ref{appx:additional_exp}, which are consistent with demonstrated results in the above sections.

\begin{figure}[t]
\centerline{\includegraphics[width=0.98\columnwidth]{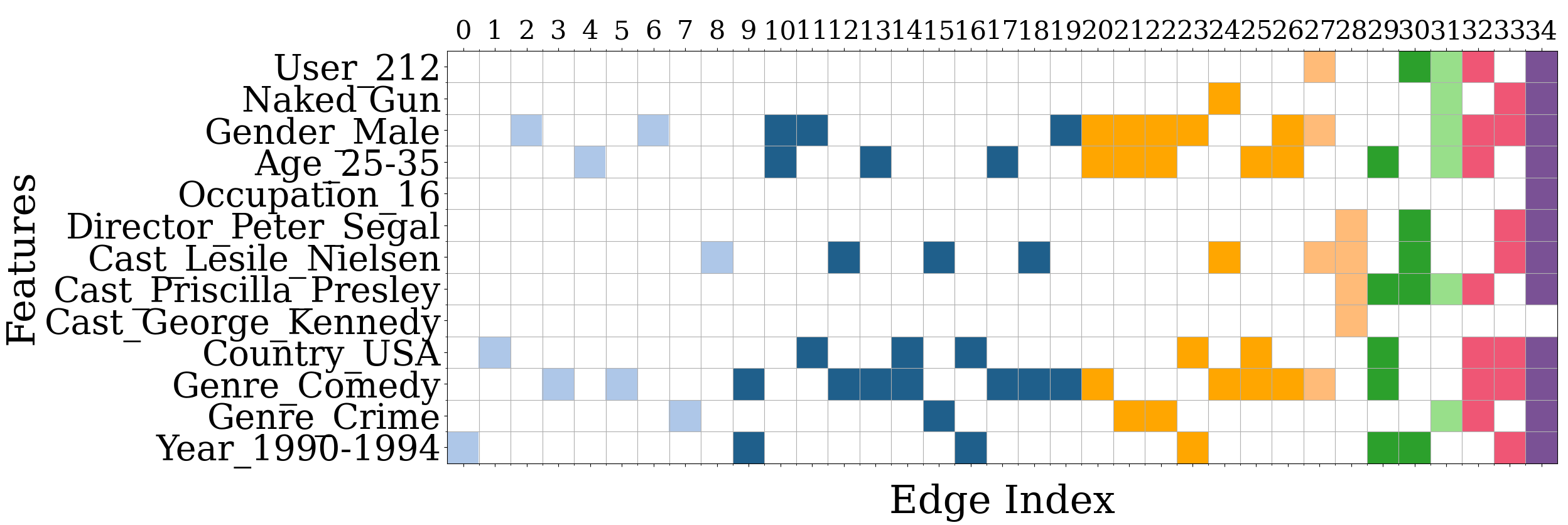}}
\vskip -0.1in
\caption{\small The detected feature interactions (in an incidence matrix) of a data sample in the MovieLens 1M dataset. Feature interactions are sorted by their orders and columns with the same color represent the interactions of the same order.}
\label{fig:case_study}
\end{figure}

\section{Conclusion}
\label{sec:conclusion}
We proposed a method for detecting arbitrary order beneficial feature interactions for recommendation.
The method represents each data sample as a hypergraph and uses a hypergraph neural network-based model named HIRS for recommendation.
HIRS achieves superior results by directly generating beneficial feature interactions via hyperedge prediction and performing recommendation predictions via hypergraph classification.
We define three properties of beneficial feature interactions, and propose s-Infomax and Infomin methods for effective interaction generation.
In the future work, we will explore the possibility to automatically determine the best edge number, and to extend s-Infomax and Infomin methods to other domains such as anomaly detection.

\begin{acks}
This work is supported by the China Scholarship Council (CSC).
In this research, Junhao Gan is in part supported by Australian Research Council (ARC)
Discovery Early Career Researcher Award (DECRA) DE190101118, and Sarah Erfani is in part supported by Australian Research Council (ARC) Discovery Early Career Researcher Award (DECRA) DE220100680.
\end{acks}

\balance
\bibliography{reference}
\bibliographystyle{ACM-Reference-Format}

\appendix

\section{Proof of Proposition 1}
\label{appx:prop1_proof}

\textbf{Proposition \ref{prop:three_properties}.}
\textit{
Consider an input-output pair $(\bm{x},y)$, where $\bm{x}$ is a set of features, a function maps $\bm{x}$ to $y$ in a Markov chain $\bm{x}\rightarrow \{\bh_1, \bh_2,...,\bh_k\}\rightarrow y$, where $\{\bh_1, \bh_2,...,\bh_k\}$ is a set of middle states mapped from $\bm{x}$, if the $\{\bh_1, \bh_2,...,\bh_k\}$ makes Equation \eqref{fun:infomaxmin} reach the optimal value, it also makes Equation \eqref{fun:three_properties} reach the optimal value.
}

\begin{proof}
Due to the non-negativity of mutual information, we have $I(\bh_i;\bh_j)\geqslant 0$. $\sum_{i\neq j}I(\bh_i;\bh_j)$ reaches the smallest value when $I(\bh_i;\bh_j)=0$ for all $i\neq j$. That is, each pair of hyperedge representations are independent. Therefore, we have
\begin{equation}
\label{fun:proof_1}
\small
I((\bh_1, \bh_2,...,\bh_k);y)=\sum_{i=1}^{k}I(\bh_i;y).
\end{equation}

It means that when $\sum_{i\neq j}I(\bh_i;\bh_j)$ reaches the smallest value, maximizing $\sum_{i=1}^{k}I(\bh_i;y)$ equals to maximizing $I((\bh_1, \bh_2,...,\bh_k);y)$. 
Next, since
\begin{equation}
\label{fun:proof_2}
\small
0 \leqslant I(\bh_i;\bh_j;y)\leqslant I(\bh_i;\bh_j),
\end{equation}
$\sum_{i\neq j}I(\bh_i;\bh_j;y)=0$ when $\sum_{i\neq j}I(\bh_i;\bh_j)=0$.

Finally, since $I(\bh_i;y)=H(\bh_i) - H(\bh_i|y)$, maximizing $I(\bh_i;y)$ tries to reduce $H(\bh_i|y)$ and in the optimal situation, $H(\bh_i|y)=0$.

In summary, when the formulation in Equation \eqref{fun:infomaxmin} reaches the optimal situation, we have maximized $I((\bh_1, \bh_2,...,\bh_k);y)$, and have $I(\bh_i;\bh_j;y) =0$ and $H(\bh_i|y)=0$, which make Equation \eqref{fun:three_properties} reach the optimal situation.
\end{proof}

\section{IHGNN Satisfying the Statistical Interaction Principle}
\label{appx:prop2_proof}

We first give the definition of high-order statistical interaction following \citet{sorokina2008detecting}.
\begin{definition}
\label{def:ho-si}
(\textbf{Statistical Interaction}) 
Consider a function $f(\cdot)$ with input variables $\bm{x}=\{x_1,x_2, ..., x_m\}$. Then, a set of variables $\I\subseteq \bm{x}$ is a statistical interaction of function $f(\cdot)$ if and only if there does \textbf{not} exist a set of functions $f_{\char`\\i}(\cdot)$, where $x_i\in\I$ and $f_{\char`\\i}(\cdot)$ is not a function of $x_i$, such that 
\begin{equation}
\label{fun:ho-si}
\small
f(\bm{x}) = \sum_{x_i\in\I}f_{\char`\\i}(x_1, x_{i-1}, x_{i+1}, ...,x_m)
\end{equation}
\end{definition}

Then, we prove that IGHNN models a set of features as a statistical interaction if they form a hyperedge in the input hypergraph in Proposition \ref{prop:si_in_HIRS}. 

\begin{proposition}
\label{prop:si_in_HIRS}
Consider an input-output pair $(\bm{x}, y)$. Let $G\la\V,\emptyset\ra$ be the input of HIRS, where $\V=\bm{x}$,
the IHGNN function (Equation \eqref{fun:IHGNN}) flags a statistical interaction among all the features in $q\subseteq \V$ if $q=\{a_i|\bm{E}^{\prime}_{ik}=1\}_{i\in \V}$, where $\bm{E}^{\prime}_k$ is a hyperedge predicted by $f_{gen}$ (Equation \eqref{fun:CrossRest}).
\end{proposition}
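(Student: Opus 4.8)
The plan is to argue by contradiction, assuming that the features in $q$ do \emph{not} form a statistical interaction of the IHGNN function, and deriving a decomposition of the hyperedge term $\bm{h}_k=f_E(\mathit{sum}(\{\bm{V}^g_i\mid\bm{E}'_{ik}=1\}))$ that contradicts the assumption. First I would unfold Equation~\eqref{fun:IHGNN}: the output $y'$ is $g(\phi(\{\psi(\{\bm{h}_j\mid\bm{E}'_{ij}=1\})\}_i))$, so I need to track how a single hyperedge representation $\bm{h}_k$ propagates through $\psi$, $\phi$, and the readout $g$. The key observation to isolate is that $\bm{h}_k$ depends on exactly the node vectors $\bm{V}^g_i$ for $i\in q$, and that $\bm{V}^g_i=emb^g(o_i)\cdot w_i$ is (after fixing the embedding table) an injective-up-to-scaling encoding of feature $i$; so ``$\bm{h}_k$ is not decomposable as a sum $\sum_{x_i\in q} f_{\backslash i}$'' is the precise content we want to transfer to the whole function $f_{IHGNN}$.

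The main steps, in order: (1) show that if $f_{IHGNN}$ admitted a decomposition of the form in Equation~\eqref{fun:ho-si} with respect to $q$, then — because the only place \emph{all} of $q$'s features enter jointly is through the single summand $\bm{h}_k$ inside the multiset fed to $\psi$ for each node $i\in q$ — one could pull that decomposition back to a decomposition of $\bm{h}_k$ itself, i.e.\ $\bm{h}_k=\sum_{i\in q}\bm{g}_{\backslash i}$ with $\bm{g}_{\backslash i}$ independent of $\bm{V}^g_i$; here I would use that $\psi,\phi$ are element-wise means (affine, hence preserving additive separability) and that $g$ is the linear readout, so additive structure downstream lifts back to additive structure in $\bm{h}_k$. (2) Invoke the non-linearity of $f_E$: since $f_E$ is an MLP with a ReLU layer, $f_E(\mathit{sum}(\{\bm{V}^g_i\}_{i\in q}))$ is \emph{not} of the form $\sum_{i\in q}\bm{g}_{\backslash i}(\dots)$ for generic weights (a ReLU composed with a full-rank affine map of $\sum_{i\in q}\bm{V}^g_i$ is not additively separable in the summands), which is exactly the fact cited as ``theoretically justified'' for the non-linear modeling choice and mirrors the argument in \citet{su2020detecting,sorokina2008detecting}. (3) Conclude that no such $f_{\backslash i}$ exist, so by Definition~\ref{def:ho-si} the set $q$ is a statistical interaction of $f_{IHGNN}$.

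I expect the main obstacle to be step~(1): making rigorous the claim that a decomposition of the \emph{global} function forces a decomposition of the \emph{local} hyperedge representation. One has to be careful that the contributions of $\bm{h}_k$ to different node patches $\bm{v}'_i$ (for the various $i\in q$) and then to $\bm{c}$ do not ``cancel'' or recombine in a way that hides the non-separability; the cleanest route is probably to note that each $\psi,\phi$ is a fixed affine map with a strictly positive coefficient on $\bm{h}_k$ (mean over a multiset that provably contains $\bm{h}_k$, since every $i\in q$ has $\bm{E}'_{ik}=1$), so the map $\bm{h}_k\mapsto y'$ is affine with nonzero leading coefficient, hence additive separability of $y'$ in $\{\bm{V}^g_i\}_{i\in q}$ is equivalent to additive separability of $\bm{h}_k$ in $\{\bm{V}^g_i\}_{i\in q}$. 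A secondary subtlety is that Definition~\ref{def:ho-si} quantifies over \emph{all} possible $f_{\backslash i}$, including ones that could in principle exploit the other (non-$q$) features or the other hyperedges; I would handle this by holding all inputs outside $q$ fixed and arguing the separability statement pointwise in those, which suffices to contradict Equation~\eqref{fun:ho-si}.
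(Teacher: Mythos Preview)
Your proposal is correct and follows essentially the same approach as the paper: argue by contradiction, use the linearity of $\psi$, $\phi$, and $g$ to rewrite $f_{IHGNN}$ as a linear combination $\sum_{j}\alpha_j f_E(\cdot)$ of hyperedge terms, and then observe that the non-linear summand $f_E(\mathit{sum}(\{\bm{V}^g_i\}_{i\in q}))$ cannot belong to any decomposition of the form $\sum_{i\in q} f_{\backslash i}$. Your step~(1) and the subtleties you flag (lifting separability back through the affine aggregation, holding non-$q$ inputs fixed) are in fact more careful than the paper's own argument, which simply asserts that the non-linear component ``does not belong to any part of the right hand side'' without elaboration.
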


\begin{proof}

We prove by contradiction: assume there is a $q=\{c_i|\bm{E}^{\prime}_{ik}=1\}_{i\in \V}$ but the IHGNN function does not flag a statistical interaction on all the features in $q$.

In HIRS, the edge modeling function $f_{E}$ is a neural network, which is a non-linear function. Without losing generality, we set both the aggregation function $\phi$ and $\psi$ as the element-wise mean function and $g$ is a linear regression function. Therefore, we can regard $f_{IGHNN}$ as:
\begin{equation}
\label{fun:IHGNN_re}
\small
f_{IHGNN}(G_H\la \V, \E^{\prime}\ra) = \sum_{j\in\E^{\prime}}\alpha_j f_{E}(sum(\{\bm{V}^{g}_i|\bm{E}^{\prime}_{ij}=1\}_{i\in\V})),
\end{equation}
where $\alpha_j$ are some scalar values.

According to Definition \ref{def:ho-si}, if  $f_{IHGNN}$ does not flag a statistical interaction on all the features in $q$, we can rewrite Equation \eqref{fun:IHGNN_re} as a linear combination of a set of functions that each function does not rely on at least one of the features in $q$. That is:
\begin{equation}
\label{fun:IHGNN_assumption}
\small
f_{IHGNN}(G_H\la \V, \E^{\prime}\ra) = \sum_{i\in q}f_{\char`\\i}(\{\bm{V}^{g}_j\}_{j\in \V{\char`\\\{i\}}}),
\end{equation}

However, since all features in $q$ form a hyperedge in $\E^{\prime}$, Equation \eqref{fun:IHGNN_re} contains a component $f_{E}(sum(\{\bm{V}^{g}_i|i\in q\}_{i\in\V}))$, which non-linearly models all features in $q$. This component does not belong to any part of the right hand side of Equation \eqref{fun:IHGNN_assumption}. Therefore, Equation \eqref{fun:IHGNN_assumption} does not hold, which is a contradiction of the assumption. 
\end{proof}

Note that according to the definition of statistical interaction, if $f_{IHGNN}$ flags a statistical interaction on all features in $q$, $f_{IHGNN}$ flags a statistical interaction on any $q_{s}\subseteq q$. This is the reason that we use Infomin to force each hyperedge to contain different feature interactions and prevent multiple feature interactions from merging into one hyperedge.

\section{Datasets and Baselines}
\label{appx:dataset_baseline}

\paragraph{Datasets}
We study three real-world datasets to evaluate our model: \textbf{MovieLens 1M} \cite{harper2015movielens} contains users' ratings on movies. 
Each data sample contains a user and a movie with their corresponding attributes as features.
We further collect movies' other features, such as directors and casts from IMDB
to enrich the datasets.
\textbf{Book-crossing} \cite{ziegler2005improving} contains users' implicit and explicit ratings of books. Each data sample contains a user and a book with their corresponding features. The reprocessed words in the book titles are also regarded as features of the book.
\textbf{MovieLens 25M} \cite{harper2015movielens} contains movie attribute information and the tags (e.g., gunfight, dragon) that users gave to the movies. We regard the attributes genres, years, and the 10 most relevant tags for movie as features.

We transfer the explicit ratings to implicit feedback. We regard the ratings greater than 3 as positive ratings for MovieLens 1M and MovieLens 25M, and regard all rated explicit ratings as positive ratings for Book-crossing due to its sparsity. Then, we randomly select the same number of negative samples equal to the number of positive samples for each user. 

\paragraph{Baselines}
We compare our model with eleven state-of-the-art models that consider feature interactions, including:

\textbf{FM} \cite{koren2008factorization} models every pairwise feature interactions by dot product and sum up all the modeling results as the prediction. 
\textbf{AFM} \cite{xiao2017attentional} calculates an attention value for each feature interaction on top of FM.
\textbf{NFM} \cite{he2017neural} models each pairwise feature interaction and aggregates all the modeling results by an MLP.
\textbf{Fi-GNN} \cite{li2019fi} represents each data sample as a feature graph and models all pairwise interactions in a self-attention based GNN.
\textbf{$L_0$-SIGN} \cite{su2020detecting} represents each data sample as a feature graph. It detects beneficial pairwise feature interactions and uses a GNN to model the detected ones for recommendation.
\textbf{AutoInt} \cite{song2019autoint} explicitly models all pairwise feature interactions using a multi-head self-attentive neural network and aggregate all the modeling results as the prediction.
\textbf{DeepFM} \cite{guo2017deepfm} combines the results of an MLP that implicitly models high-order feature interactions and a FM that models pairwise feature interactions.
\textbf{xDeepFM} \cite{lian2018xdeepfm} is an extension of DeepFM that models high-order feature interactions in explicit way. 
\textbf{DCNv2} \cite{wang2021dcn} leverages deep and cross network to implicitly and explicitly model feature interactions.
\textbf{AutoFIS} \cite{liu2020autofis} detects limited high-order beneficial feature interactions by setting a gate for each feature interaction. 
\textbf{AFN} \cite{cheng2020adaptive} leverages logarithmic neural network to learn high-order feature interactions adaptively from data.  
For all baselines, we use the same MLP settings in all baselines (if use) as our interaction modeling function in HIRS for a fair comparison. 
In terms of he highest order for baselines that explicit model high-order feature interactions, we follow the order settings in their original papers.

\section{Running Time Analysis}
\label{appx:time_complexity}
We have theoretically analyzed the efficiency advantage of HIRS over other baselines that explicitly consider high-order feature interactions. In this section, we experimentally evaluate the running time of HIRS and these baselines. 
Specifically, we run HIRS and baselines on the MovieLens 1M dataset and record the time consuming they used in each epoch while training.  
For the baselines, we record the running time of setting different order limitations. We then compare with HIRS that consider arbitrary order feature interactions. Table \ref{tab:runtime} demonstrates the time consuming results.
Note that since the running environment of our model and baseline models are not exact the same, the time comparison results reported are not completely accurate. For example, some baseline models directly call packages for compiling, which is more time efficient. However, we can still gain useful information from the results.   

From the table, we can see that the running time of our model approximately equals to baseline models modeling low order feature interactions, which is consistent with our theoretical analysis in Section \ref{sec:time_complexity}.
Meanwhile, the running time for baseline models increase with the highest order increasing, while HIRS has constant running time.
The results proves the efficiency of our model in modeling arbitrary order feature interactions, while baseline models have much higher running time when considering up to arbitrary orders (order 14).
In addition, when we detect fewer feature interactions for each data sample, e.g., $k=20$, the running time is further decreased comparing to that of $k=40$.

\begin{table}[t]
\small
\centering
\begin{tabular}{lccccccc}
\hline
\textbf{Method} & \multicolumn{6}{c}{\textbf{Order}}\\
                & 2 & 3 & 4 & 5 & ... & 13 & 14\\
\hline
AutoInt     & 5.48 & 6.26 & 10.96 & 12.59 & ... & 23.34 & 24.27 \\
DCNv2      & 9.40 & 14.09  & 18.79 & 22.71 & ... & 46.19 & 50.11 \\
xDeepFM   & 18.79 & 39.93  & 59.51 & 78.30 & ... & 229.42 & 248.21 \\
\hline
HIRS ($k$=40)       & \multicolumn{7}{c}{15.16} \\
HIRS ($k$=20)       & \multicolumn{7}{c}{11.38} \\
\hline
\end{tabular}
\caption{\small Comparing the time (in seconds) of different models for one epoch on the MovieLens 1M dataset. All the models are run on the same machine equipped with a CPU: Intel(R) Core(TM) i7-9700K @ 3.60GHz, and a GPU: GeForce RTX 2080Ti.}
\label{tab:runtime}
\end{table}

\section{Additional Experimental Results}
\label{appx:additional_exp}

In this section, we list additional results (in Figures \ref{appx:fig1}-\ref{appx:figlast}) that are not shown in our paper due to the space limit.
These results have consistent trends to the results reported in the paper, which provide further evidence on the robustness of our model.

\begin{figure}[ht]
\centering
\begin{tikzpicture}
\begin{axis}[barstyle, xlabel={MovieLens 25M},bar width=1.5mm, symbolic x coords = {HIRS, w/o MI, w/o $L_0$}, legend style={draw=none, at={(0.15,0.99)},anchor=west, nodes={scale=0.9, transform shape}, legend image post style={scale=0.6}}, ymax=0.94,ymin=0.87]
\addplot [color=mycolor4, fill=mycolor4] coordinates { 
(HIRS, 0.9223 )
(w/o MI, 0.9164)
(w/o $L_0$, 0.9182 )
};
\addplot[color=mycolor1, fill=mycolor1] coordinates { 
(HIRS, 0.8846 )
(w/o MI, 0.87997 )  
(w/o $L_0$,  0.88116 ) 
};
\legend{R@10, N@10}
\end{axis}
\end{tikzpicture}
\begin{tikzpicture}
\begin{axis}[barstyle, xlabel={MovieLens 25M},bar width=1.5mm, symbolic x coords = {HIRS, w/o HP, w/o NM, w/o Both}, legend style={draw=none, at={(0.15,0.99)},anchor=west, nodes={scale=0.9, transform shape}, legend image post style={scale=0.6}},ymax=0.94,ymin=0.87]
\addplot [color=mycolor4, fill=mycolor4] coordinates { 
(HIRS,    0.9223)
(w/o HP,  0.9185)
(w/o NM,  0.9174)
(w/o Both, 0.9138)
};
\addplot[color=mycolor1, fill=mycolor1] coordinates { 
(HIRS,    0.8846)
(w/o HP,  0.8818)
(w/o NM,  0.8803)
(w/o Both,0.8778)
};
\legend{R@10, N@10}
\end{axis}
\end{tikzpicture}
\vskip -0.10in
\caption{\textit{Left}: Additional results for Figure \ref{fig:influence_infomax_performance} on the MovieLens 25M dataset. \textit{Right}: Additional results for Figure \ref{fig:influence_component_performance} on the MovieLens 25M dataset.}
\vskip -0.15in
\end{figure}
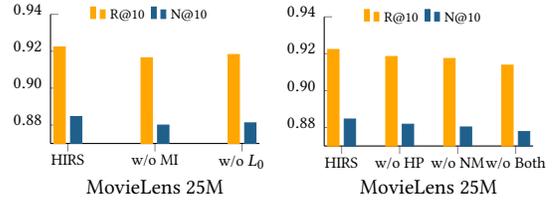

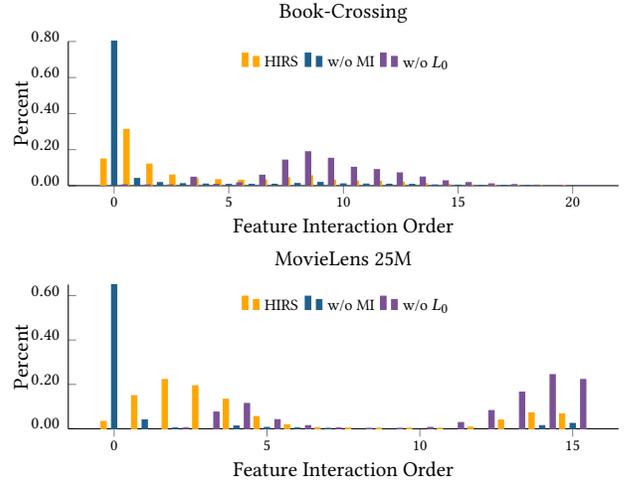
\begin{figure}[ht]
\centering
\begin{tikzpicture}
\begin{axis}[barstyle, width=0.5\textwidth, height=3.5cm, title=Book-Crossing, legend columns=3,ylabel={Percent}, xlabel={Feature Interaction Order}, symbolic x coords = {0,1,2,3,4,5,6,7,8,9,10,11,12,13,14,15,16,17,18,19,20,21,22,23,24,25,26,27,28,29,30,31,32,33,34,35,36,37,38,39}, legend style={draw=none, at={(0.3,0.85)},anchor=west, nodes={scale=1, transform shape}, legend image post style={scale=0.7}},ymin=0, ymax=0.8]
\addplot [y filter/.expression={y==0 ? nan : y},color=mycolor4, fill=mycolor4] coordinates { 
(0, 0.148)
(1, 0.312)
(2, 0.120)
(3, 0.059)
(4, 0.040)
(5, 0.033)
(6, 0.030)
(7, 0.032)
(8, 0.044)
(9, 0.055)
(10, 0.031)
(11, 0.026)
(12, 0.024)
(13, 0.018)
(14, 0.011)
(15, 0.007)
(16, 0.004)
(17, 0.003)
(18, 0.001)
(19, 0.001)
(20, 0.001)
(21, 0)
(22, 0)
(23, 0)
(24, 0)
(25, 0)
(26, 0)
(27, 0)
(28, 0)
(29, 0)
(30, 0)
(31, 0)
(32, 0)
(33, 0)
(34, 0)
(35, 0)
(36, 0)
(37, 0)
(38, 0)
(39, 0)
};
\addplot[y filter/.expression={y==0 ? nan : y},color=mycolor1, fill=mycolor1] coordinates { 
(0, 0.826)
(1, 0.040)
(2, 0.017)
(3, 0.011)
(4, 0.009)
(5, 0.007)
(6, 0.007)
(7, 0.008)
(8, 0.012)
(9, 0.018)
(10, 0.010)
(11, 0.009)
(12, 0.008)
(13, 0.007)
(14, 0.004)
(15, 0.003)
(16, 0.002)
(17, 0.001)
(18, 0.001)
(19, 0)
(20, 0)
(21, 0)
(22, 0)
(23, 0)
(24, 0)
(25, 0)
(26, 0)
(27, 0)
(28, 0)
(29, 0)
(30, 0)
(31, 0)
(32, 0)
(33, 0)
(34, 0)
(35, 0)
(36, 0)
(37, 0)
(38, 0)
(39, 0)
};
\addplot[y filter/.expression={y==0 ? nan : y}, color=mycolor2, fill=mycolor2] coordinates { 
(0, 0.005)
(1, 0.004)
(2, 0.004)
(3, 0.047)
(4, 0.007)
(5, 0.016)
(6, 0.058)
(7, 0.142)
(8, 0.188)
(9, 0.152)
(10, 0.102)
(11, 0.089)
(12, 0.071)
(13, 0.048)
(14, 0.027)
(15, 0.017)
(16, 0.010)
(17, 0.006)
(18, 0.003)
(19, 0.002)
(20, 0.001)
(21, 0)
(22, 0)
(23, 0)
(24, 0)
(25, 0)
(26, 0)
(27, 0)
(28, 0)
(29, 0)
(30, 0)
(31, 0)
(32, 0)
(33, 0)
(34, 0)
(35, 0)
(36, 0)
(37, 0)
(38, 0)
(39, 0)
};
\legend{HIRS, w/o MI, w/o $L_0$}
\end{axis}
\end{tikzpicture}

\begin{tikzpicture}
\begin{axis}[barstyle, width=0.5\textwidth, height=3.5cm, title=MovieLens 25M, legend columns=3,ylabel={Percent}, xlabel={Feature Interaction Order}, symbolic x coords = {0,1,2,3,4,5,6,7,8,9,10,11,12,13,14,15}, legend style={draw=none, at={(0.3,0.85)},anchor=west, nodes={scale=1, transform shape}, legend image post style={scale=0.7}},ymin=0, ymax=0.65]
\addplot [y filter/.expression={y==0 ? nan : y},color=mycolor4, fill=mycolor4] coordinates { 
(0, 0.033)
(1,0.149 )
(2, 0.222 )
(3,0.194  )
(4, 0.133 )
(5, 0.055 )
(6, 0.018 )
(7, 0.005 )
(8, 0.002 )
(9, 0.001 )
(10, 0.001 )
(11, 0.001 )
(12, 0.007 )
(13, 0.039 )
(14, 0.072 )
(15, 0.067 )
};
\addplot[y filter/.expression={y==0 ? nan : y},color=mycolor1, fill=mycolor1] coordinates { 
(0, 0.897 )
(1,0.040 )
(2, 0.003 )
(3, 0.000 )
(4, 0.012 )
(5, 0.006 )
(6, 0.003 )
(7, 0.001 )
(8, 0.000 )
(9, 0.000 )
(10, 0.000 )
(11, 0.000 )
(12, 0.000 )
(13, 0.000 )
(14, 0.014 )
(15, 0.024 )
};
\addplot[y filter/.expression={y==0 ? nan : y}, color=mycolor2, fill=mycolor2] coordinates { 
(0, 0.000 )
(1, 0.000 )
(2, 0.004 )
(3, 0.075 )
(4, 0.114 )
(5, 0.041 )
(6, 0.014 )
(7, 0.003 )
(8, 0.001 )
(9, 0.001 )
(10, 0.006 )
(11, 0.028 )
(12, 0.082 )
(13, 0.165 )
(14, 0.244 )
(15, 0.222 )
};
\legend{HIRS, w/o MI, w/o $L_0$}
\end{axis}
\end{tikzpicture}
\caption{Additional results for Figure \ref{fig:influence_infomax_edgeprediction}. Comparing the distribution of the generated feature interaction orders from HIRS, HIRS without s-Infomax and Infomin, and HIRS without $L_0$ regularization on the Book-Crossing and the MovieLens 25M datasets.}
\label{appx:fig1}
\end{figure}

\begin{figure}[H]
\centering
\begin{tikzpicture}
\begin{axis}[linestyle,width=0.205\textwidth, xlabel={$\lambda_1$},legend columns=2, ymax=0.95, ymin=0.89, symbolic x coords={0.0002, 0.002, 0.02, 0.2, 0.4}, legend style={draw=none, at={(0.9,0.85)},anchor=east, nodes={scale=0.65, transform shape}}, legend image post style={scale=0.6}]
\addplot[mark=*, color=mycolor2] coordinates {
(0.0002,  0.92228 )
(0.002, 0.92505 )
(0.02,  0.9279)
(0.2,  0.920073 )
(0.4, 0.91779 )
};
\addplot[mark=triangle*,color=mycolor3] coordinates {
(0.0002, 0.90248 )
(0.002, 0.9061 )
(0.02, 0.9081)
(0.2, 0.897795 )
(0.4, 0.893806 )
};
\legend{R@10, N@10}
\end{axis}
\end{tikzpicture}
\begin{tikzpicture}
\begin{axis}[linestyle, width=0.205\textwidth, title=Book-Crossing, yticklabels={,,}, xlabel={$\lambda_2$},legend columns=2, ymax=0.95, ymin=0.89, symbolic x coords={0.001, 0.01, 0.1, 1, 10}, legend style={draw=none, at={(0.9,0.85)},anchor=east, nodes={scale=0.65, transform shape}}, legend image post style={scale=0.6}]
\addplot[mark=*, color=mycolor2] coordinates {
(0.001, 0.92426 )
(0.01,  0.92481 )
(0.1,   0.92576 )
(1,     0.9279 )
(10,    0.92761 )
};
\addplot[mark=triangle*,color=mycolor3] coordinates {
(0.001, 0.9031721 )
(0.01,  0.90448 )
(0.1,   0.906231 )
(1,     0.90811)
(10,    0.90774 )
};
\legend{R@10, N@10}
\end{axis}
\end{tikzpicture}
\begin{tikzpicture}
\begin{axis}[linestyle, width=0.205\textwidth, yticklabels={,,}, xlabel={$\lambda_3$},legend columns=2, ymax=0.95, ymin=0.89, symbolic x coords={0.001, 0.01, 0.1, 1, 10}, legend style={draw=none, at={(0.9,0.85)},anchor=east, nodes={scale=0.65, transform shape}}, legend image post style={scale=0.6}]
\addplot[mark=*, color=mycolor2] coordinates {
(0.001, 0.92681)
(0.01,  0.928315)
(0.1, 0.9279 )
(1,  0.9277 )
(10, 0.92523)
};
\addplot[mark=triangle*,color=mycolor3] coordinates {
(0.001, 0.90736)
(0.01,  0.90881)
(0.1, 0.90811)
(1, 0.90887 )
(10, 0.905033)
};
\legend{R@10, N@10}
\end{axis}
\end{tikzpicture}
\begin{tikzpicture}
\begin{axis}[linestyle,width=0.205\textwidth, xlabel={$\lambda_1$},legend columns=2, symbolic x coords={0.0002, 0.002, 0.02, 0.2, 0.4},ymax=0.955, ymin=0.86, legend style={draw=none, at={(0.9,0.85)},anchor=east, nodes={scale=0.65, transform shape}}, legend image post style={scale=0.6}]
\addplot[mark=*, color=mycolor2] coordinates {
(0.0002,  0.9207)
(0.002, 0.9214)
(0.02,  0.9223)
(0.2,  0.9219)
(0.4, 0.9210)
};
\addplot[mark=triangle*,color=mycolor3] coordinates {
(0.0002, 0.8813 )
(0.002, 0.8827)
(0.02, 0.8846)
(0.2, 0.8816)
(0.4, 0.8814)
};
\legend{R@10, N@10}
\end{axis}
\end{tikzpicture}
\begin{tikzpicture}
\begin{axis}[linestyle, width=0.205\textwidth, title=MovieLens 25M,  yticklabels={,,}, xlabel={$\lambda_2$},legend columns=2, ymax=0.955, ymin=0.86, symbolic x coords={0.001, 0.01, 0.1, 1, 10}, legend style={draw=none, at={(0.9,0.85)},anchor=east, nodes={scale=0.65, transform shape}}, legend image post style={scale=0.6}]
\addplot[mark=*, color=mycolor2] coordinates {
(0.001, 0.9188)
(0.01,  0.9201)
(0.1,   0.9206)
(1,     0.9223)
(10,    0.9156)
};
\addplot[mark=triangle*,color=mycolor3] coordinates {
(0.001, 0.8743)
(0.01,  0.8792)
(0.1,   0.8819)
(1,     0.8846)
(10,    0.8837)
};
\legend{R@10, N@10}
\end{axis}
\end{tikzpicture}
\begin{tikzpicture}
\begin{axis}[linestyle, width=0.205\textwidth, yticklabels={,,}, xlabel={$\lambda_3$},legend columns=2, ymax=0.955, ymin=0.86, symbolic x coords={0.001, 0.01, 0.1, 1, 10}, legend style={draw=none, at={(0.9,0.85)},anchor=east, nodes={scale=0.65, transform shape}}, legend image post style={scale=0.6}]
\addplot[mark=*, color=mycolor2] coordinates {
(0.001, 0.9207)
(0.01,  0.9213)
(0.1, 0.9223)
(1,  0.9210)
(10, 0.9196)
};
\addplot[mark=triangle*,color=mycolor3] coordinates {
(0.001, 0.8801)
(0.01,  0.8832)
(0.1, 0.8846)
(1, 0. 8827)
(10,0.8792)
};
\legend{R@10, N@10}
\end{axis}
\end{tikzpicture}
\caption{Additional results for Figure \ref{fig:para_study_weights}. Comparing the performance of different weight values of $L_0$ activation regularization, s-Infomax, and Infomin on the Book-Crossing and the MovieLens 25M dataset.}
\label{appx:figlast}
\end{figure}
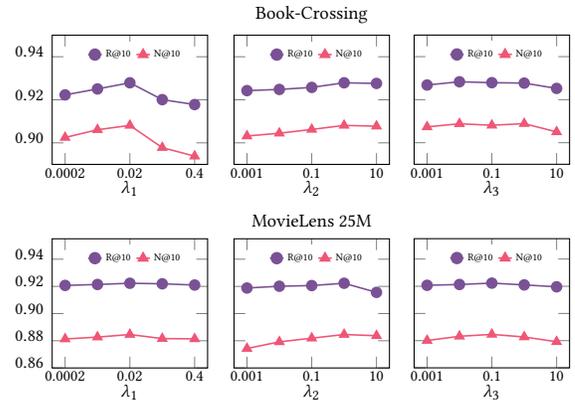

\end{document}